\numberwithin{equation}{section}
\newtheorem{thm}{Theorem}
\newtheorem{lem}{Lemma}
\newtheorem{cor}{Corollary}
\newtheorem{prop}{Proposition}
\newtheorem{defn}{Definition}
\newtheorem{example}{Example}
\newtheorem{remark}{Remark}
\newcommand{\Tr}{{\rm {Tr}}}
\begin{document}
\title{New  P$c$N and AP$c$N functions over finite fields}
\author{Yanan~Wu, Nian~Li,
        and~Xiangyong~Zeng
\thanks{The authors are with the Hubei Key Laboratory of Applied Mathematics, Faculty of Mathematics and Statistics, Hubei
University, Wuhan, 430062, China. Email: yanan.wu@aliyun.com, nian.li@hubu.edu.cn, xzeng@hubu.edu.cn}
}
\date{}
\maketitle
\begin{quote}
{\small {\bf Abstract:} Functions with low $c$-differential uniformity were proposed in $2020$ and attracted lots of attention, especially the P$c$N and AP$c$N functions, due to their applications in cryptography. The objective of this paper is to study   P$c$N and AP$c$N functions. As a consequence, we propose a class of P$c$N functions and  four classes of AP$c$N functions by using the cyclotomic technique and the switch method.   In addition, four classes of   P$c$N or AP$c$N functions are presented by virtue of (generalized) AGW criterion.
}

{\small {\bf Keywords:}} $C$-differential uniformity,  perfect $c$-nonlinear, almost perfect $c$-nonlinear, AGW criterion

\end{quote}

\section{Introduction}
The differential attack introduced by Biham and Shamir in \cite{BS} has attracted a lot of attention since it is a powerful cryptanalytic method to attack block ciphers.  To measure the ability of a given function to resist the differential attack, Nyberg in \cite{N} introduced  the concept of differential $\delta$-uniformity:

\begin{defn} Let  $\mathbb{F}_q$  denote the finite field with $q$ elements. A function $F$  from $\mathbb{F}_q$ to itself is called differentially $\delta$-uniform, where $ \delta=\max\limits_{a \in \mathbb{F}_q^*,b \in \mathbb{F}_q} \#\left\{x \in \mathbb{F}_q: F(x+a)-F(x)=b\right\}.$
\end{defn}

If $\delta=1$, then $F$ is called a perfect nonlinear (PN) function or planar function. If $\delta=2$, then $F$ is called an almost perfect nonlinear (APN) function. It is well known that the lower  the quantity of $\delta$, the stronger the ability of this function to resist differential attack. Therefore, the functions with low differential uniformity have been widely investigated in the past years. Some known PN and APN functions can be found in \cite{BH,CHHKXZ,CM,DO,DY,D1,D2,D3,ZKW}.

Very recently, inspired from the development of a practical differential attack in \cite{BCJW}, Ellingsen, Felke, Riera, St$\breve{a}$nic$\breve{a}$ and Tkachenko in \cite{EFRST} proposed a new type of differential which is called multiplicative differential (and the corresponding $c$-differential uniformity) defined as follows:

\begin{defn} \label{def} Let  $\mathbb{F}_q$  denote the finite field with $q$ elements, where $q$ is a power of a prime $p$. Given a $p$-ary function $F: \mathbb{F}_{q} \rightarrow \mathbb{F}_{q}$ and $c \in \mathbb{F}_{q}\backslash\{1\}$, the (multiplicative) $c$-derivative of $F$ with respect to $a\in \mathbb{F}_{q}$ is defined as
$$
{}_cD_{a} F(x)=F(x+a)-c F(x).
$$
Let ${}_c\Delta_{F}= \max\limits_{a, b \in \mathbb{F}_q} \#\left\{x \in \mathbb{F}_q: {}_c D_{a} F(x)=b\right\}.$ Then $F$ is called the differentially $(c, {}_c\Delta_{F})$-uniform.
\end{defn}

Note that  if $c=0$ or $a=0,$ then ${}_c D_{a} F(x)$ is a shift of the function $F$ and if $c=1$ and $a\ne 0$, then ${}_c D_{a} F(x)$ becomes  the usual derivative. Correspondingly,  if  ${}_c \Delta_{F}=1$, then  $F$  is  called a perfect $c$-nonlinear (P$c$N) function. If  ${}_c \Delta_{F}=2,$ then  $F$ is called an almost perfect $c$-nonlinear (AP$c$N) function.
Note that  $F$ is a P$c$N function if and only if ${}_c D_{a} F(x)$ is a permutation polynomial over $\mathbb{F}_{q}$ for any $a\in\mathbb{F}_{q}$. In particular,  if $F$ is a permutation, then $F$ is a trivial P$c$N function for $c=0$.

Ellingsen et al.  in \cite{EFRST} also characterized $c$-differential uniformity of a function in terms of its Walsh transform and they further investigated  the $c$-differential uniformity of some well-known PN functions for odd characteristics. Besides, the inverse function 
was precisely characterized for both even and odd characteristics.
Thanks to the work in \cite{EFRST}, functions with low $c$-differential uniformity, especially the P$c$N and AP$c$N functions have attracted a lot of attention for their significant applications in cryptography. For instance,
Yan, Mesnager and Zhou in \cite{YMZ} proposed  some   power functions  over finite fields with low $c$-differential uniformity and some of them are P$c$N or AP$c$N. Very recently,  Bartoli and Calderini in \cite{BC} investigated the  existence of some AP$c$N and P$c$N functions and presented several classes of P$c$N and AP$c$N functions by using AGW criterion given by   Akbary,   Ghiocab and   Wang in \cite{AGW}. Another interesting problem is whether the $c$-differential uniformity
is preserved or not through affine, extended affine (EA) or CCZ-equivalence. With some detailed analysis, Hasan,  Pal,  Riera and St$\breve{a}$nic$\breve{a}$ in \cite{HPRS} concluded that $c$-differential uniformity is not invariant under EA-equivalence and  CCZ-equivalence. In addition,  by perturbing the P$c$N functions and the $p$-to-1 linearized polynomials, they presented some general framework of P$c$N functions. More relevant information can be found in \cite{RS,S,SG,SRT,ZH}.
%

To the best of our knowledge, only very few P$c$N and AP$c$N functions have been proposed and almost all of them are monomials. In this paper, we  aim to construct P$c$N and AP$c$N functions which are multinomials. Concretely, in Section $2$,  six classes of functions with low $c$-differential uniformity are presented. More precisely, by  using the cyclotomic technique and the switch method to deal with the $c$-differential equation ${}_c\Delta_F(x)=b$ defined in Definition \ref{def}, we obtain a class of P$c$N functions and four classes of AP$c$N functions.  As a byproduct,  a class of differential $(c, {}_c\Delta_F)$-uniform functions can be obtained by swapping any two points of the inverse function, where ${}_c\Delta_F\leq3$.
 Moreover, motivated by the work of Bartoli and Calderini \cite{BC}, in Section $3$, we propose four  classes of P$c$N or AP$c$N functions by virtue of AGW criterion and generalized AGW criterion shown in \cite{AGW} and \cite{MQ}, respectively.  In order to prove the existence of these functions,  the corresponding examples are also  given after each construction. Finally,  all the known P$c$N and AP$c$N functions are summarized in Table \ref{tab1} for comparison.

Hereafter, we always assume that $q$ is a power of a prime $p$ and  $\mathbb{F}_{q}$ denote the finite field with $q$ elements. Also, we denote $\Tr_1^q$ the absolute trace of the finite field $\mathbb{F}_q$ and $\Tr^{q^n}_q$ the trace function from $\mathbb{F}_{q^n}$ to $\mathbb{F}_q$.

\section{Several classes of  functions with low $c$-differential uniformity }\label{sec2}
In this section, we present six classes of  functions which are actually piecewise functions with low $c$-differential uniformity and almost all of them are P$c$N  or AP$c$N.
For  convenience,  we firstly state some basic notations and conclusions.

\begin{lem}{\rm(\cite{BRS})}\label{quadratic-solutions} Let $q=2^m$. If $a, b \in \mathbb{F}_{q}$ and $a\ne 0$, then
 the equation $x^{2}+a x+b=0$  has two solutions in $\mathbb{F}_{q}$ if and only if $\operatorname{Tr}_1^q\left(\frac{b}{a^{2}}\right)=0.$
\end{lem}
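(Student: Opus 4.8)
The plan is to reduce the general quadratic to the normalized form $y^2 + y = c$ and then analyze the $\mathbb{F}_2$-linear structure of the squaring-plus-identity map. First I would use that $a \neq 0$ to substitute $x = ay$; dividing the resulting equation $a^2 y^2 + a^2 y + b = 0$ by $a^2$ turns the problem into solving $y^2 + y + b/a^2 = 0$ in $\mathbb{F}_q$. Since $y \mapsto ay$ is a bijection of $\mathbb{F}_q$, the original equation and the normalized one have the same number of roots, so it suffices to characterize solvability of $L(y) = c$, where $L(y) := y^2 + y$ and $c := b/a^2$.

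The key observation is that $L$ is $\mathbb{F}_2$-linear: in characteristic $2$ one has $L(y+z) = (y+z)^2 + (y+z) = (y^2+y) + (z^2+z) = L(y) + L(z)$. Its kernel is the solution set of $y^2 + y = 0$, namely $\{0,1\}$, which is one-dimensional over $\mathbb{F}_2$; hence the image $L(\mathbb{F}_q)$ is an $\mathbb{F}_2$-subspace of $\mathbb{F}_q$ of codimension one, i.e.\ of index $2$ in the additive group. On the other hand, because the Frobenius map $y \mapsto y^2$ fixes the trace, $\operatorname{Tr}_1^q(y^2) = \operatorname{Tr}_1^q(y)$, so $\operatorname{Tr}_1^q(L(y)) = \operatorname{Tr}_1^q(y^2) + \operatorname{Tr}_1^q(y) = 0$ for every $y$. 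Thus $L(\mathbb{F}_q) \subseteq \ker \operatorname{Tr}_1^q$. Since $\operatorname{Tr}_1^q$ is a nonzero $\mathbb{F}_2$-linear functional, its kernel also has index $2$; an index-$2$ subgroup contained in another index-$2$ subgroup must coincide with it, so $L(\mathbb{F}_q) = \ker \operatorname{Tr}_1^q$.

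Combining these, $c$ lies in the image of $L$ exactly when $\operatorname{Tr}_1^q(c) = 0$, i.e.\ the normalized (and hence the original) equation is solvable iff $\operatorname{Tr}_1^q(b/a^2) = 0$. It then remains to count roots: whenever a solution $y_0$ exists, the full solution set is the coset $y_0 + \ker L = \{y_0, y_0 + 1\}$, giving exactly two distinct roots, and no roots otherwise. I expect the only point requiring care to be the index/codimension bookkeeping in the previous paragraph — verifying that the containment $L(\mathbb{F}_q) \subseteq \ker \operatorname{Tr}_1^q$ of two index-$2$ subgroups forces equality, rather than attempting to build a root directly from the trace condition; the remaining steps are routine.
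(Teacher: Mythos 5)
Your proof is correct and complete: the reduction $x=ay$ to the Artin--Schreier form $y^2+y=b/a^2$, the index-$2$ comparison of $\operatorname{im}(y\mapsto y^2+y)$ with $\ker\operatorname{Tr}_1^q$, and the coset count giving exactly two roots are all sound. The paper itself gives no proof of this lemma (it is quoted from the Berlekamp--Rumsey--Solomon reference), and your argument is the standard one for this classical fact, so there is nothing to reconcile.
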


Let $l>1$ be a divisor of $q-1$ and $\omega$ be  a primitive element of $\mathbb{F}_{q}$. Define $D_i=\omega^i\langle\omega^l\rangle$ for $i=0,1,\cdots,l-1$, where $D_0$ is the multiplicative subgroup of  $\mathbb{F}_{q}$ generated by $\omega^l$. Then $\mathbb{F}_{q}=\{0\}\cup D_0\cup D_1\cup\cdots\cup D_{l-1}$. Note that $x^{\frac{q-1}{l}}=\omega^{\frac{q-1}{l}i}$ when $x\in D_i$ for $i=0,1,\cdots,l-1$.


\begin{thm}\label{f-cyclotomic} Let $l>1$ be a positive divisor of $q-1$ and $u\in\mathbb{F}_{q}$ with $u\ne 1,\,(1-l)\,\, {\rm mod}\,\, p$. Let $F(x)=x(\sum_{i=1}^{l-1}x^{\frac{q-1}{l}i}+u)\in\mathbb{F}_{q}[x]$.  If $c\in\mathbb{F}_{q}\backslash\{1\}$ satisfies either $1-\frac{l}{(1-c)(u+l-1)}$, $1+\frac{l}{(1-c)(u-1)}\in D_0$, or $1+\frac{cl}{(1-c)(u+l-1)}$, $1-\frac{cl}{(1-c)(u-1)}\in D_0$,
then ${}_c\Delta_F\leq2$.
\end{thm}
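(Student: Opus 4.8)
The plan is to exploit the fact that $F$ collapses to a piecewise-linear map on the cyclotomic cosets. Writing $A = u+l-1$ and $B = u-1$ (both nonzero by the hypothesis $u \neq 1,\,(1-l)\bmod p$), for $x \in D_0$ one has $\sum_{i=1}^{l-1} x^{\frac{q-1}{l}i} = l-1$, while for $x \in D_j$ with $j \neq 0$ the geometric sum over the nontrivial $l$-th roots of unity equals $-1$. Hence $F(x) = Ax$ for $x \in D_0$ and $F(x) = Bx$ for $x \in \bar{D}_0 := \mathbb{F}_q \setminus D_0 = \{0\}\cup D_1 \cup \cdots \cup D_{l-1}$ (the value $F(0)=0$ being consistent with the second branch). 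First I would record that $F$ is then at most $2$-to-$1$, so the case $a=0$, where ${}_cD_aF(x)=b$ reads $(1-c)F(x)=b$, contributes at most two solutions irrespective of $c$.

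For $a \neq 0$ I would split according to the branch containing $x$ and the branch containing $x+a$, giving four cases. In each case $F(x+a)-cF(x)=b$ becomes linear with constant coefficient $A(1-c)$, $B-cA$, $A-cB$, $B(1-c)$ for $(x,x+a) \in D_0\times D_0$, $D_0\times\bar{D}_0$, $\bar{D}_0\times D_0$, $\bar{D}_0\times\bar{D}_0$ respectively (call these cases $1$–$4$). Setting $P=b-Aa$, $Q=b-Ba$, $R=b-cAa$, $S=b-cBa$, each case has a single candidate $x$, and its translate $x+a$, expressed as a ratio of the relevant one of $P,Q,R,S$ by the coefficient; membership of $x$ or of $x+a$ in $D_0$ versus $\bar{D}_0$ translates into membership of that quantity in the coset determined by the coefficient. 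Since every $x \in \mathbb{F}_q$ falls in exactly one case, the number of solutions equals the number of valid cases, so it suffices to forbid three cases from being simultaneously valid.

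The decisive step is the algebraic identity rewriting the hypotheses as coincidences of coefficient cosets. Using $l = A-B$ one checks
\[
1-\frac{l}{(1-c)A}=\frac{B-cA}{(1-c)A},\qquad 1+\frac{l}{(1-c)B}=\frac{A-cB}{(1-c)B},
\]
\[
1+\frac{cl}{(1-c)A}=\frac{A-cB}{(1-c)A},\qquad 1-\frac{cl}{(1-c)B}=\frac{B-cA}{(1-c)B}.
\]
Thus the first hypothesis says $(B-cA)D_0=A(1-c)D_0$ and $(A-cB)D_0=B(1-c)D_0$, so cases $1,2$ share a coefficient coset and cases $3,4$ share one; the second hypothesis instead forces cases $1,3$ to share a coset and cases $2,4$ to share one. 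Under the first hypothesis, validity of case $1$ requires $x+a=R/(A(1-c))\in D_0$, i.e. $R\in A(1-c)D_0$, whereas validity of case $2$ requires $x+a=R/(B-cA)\in\bar{D}_0$, i.e. $R\notin (B-cA)D_0$; since these cosets coincide, cases $1$ and $2$ are mutually exclusive. The analogous dichotomy for $S$ shows cases $3$ and $4$ are mutually exclusive, leaving at most two valid cases. Under the second hypothesis the identical argument applies to the pair $(1,3)$ through $P$ and to the pair $(2,4)$ through $Q$. In either situation at most two cases survive, whence ${}_c\Delta_F\leq 2$.

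I expect the main obstacle to be bookkeeping rather than conceptual: one must verify that the hypotheses force $B-cA\neq 0$ and $A-cB\neq 0$ (they do, since $0\notin D_0$ makes the displayed numerators nonzero), so that each of the four linear equations yields a genuine single candidate and no case degenerates into an entire coset of solutions. Equal care is needed to track the inclusion $0\in\bar{D}_0$ consistently, so that the $\bar{D}_0$-membership conditions are correctly phrased as ``not in $D_0$.'' Beyond these verifications the proof is just a clean pairing of mutually exclusive branches.
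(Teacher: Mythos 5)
Your proof is correct and follows essentially the same route as the paper: reduce $F$ to a piecewise-linear map on the cyclotomic classes, solve the linear branch equations for the candidate $x$ and $x+a$, and read the hypothesis as a coincidence of coefficient cosets of $D_0$ that forces the branches to pair off into mutually exclusive cases. Your only real departures are absorbing $0$ into $\bar{D}_0$, which neatly replaces the paper's separate (and more laborious) treatment of the boundary solutions $x=0$ and $x=-a$, and working out the second hypothesis explicitly (via the $x$-membership rather than the $(x+a)$-membership) where the paper merely asserts symmetry.
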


\begin{proof}We only consider the case  $1-\frac{l}{(1-c)(u+l-1)}$, $1+\frac{l}{(1-c)(u-1)}\in D_0$ since the other case can be similarly proved. In fact, $f(x)$ is a piecewise function with the form
\begin{equation}\label{f-piecewise}
F(x)=\left\{\begin{array}{ll}
0, & x=0; \\
(u+l-1)x, & x\in D_0;\\
(u-1)x,& x\in\cup^{l-1}_{i=1}D_i.
\end{array}\right.
\end{equation}
To complete the proof, we need to solve the equation
\begin{equation}\label{c-piecewise}
F(x+a)-cF(x)=b
\end{equation}
for any $(a,b)\in\mathbb{F}_q^2$. If $a=0$, then \eqref{c-piecewise} is reduced to $(1-c)F(x)=b$ which has at most two solutions since both $(u+l-1)x$ and  $(u-1)x$ have exactly one solution in $\mathbb{F}_q$. Next, we consider the case  $a\ne0$. Let $x\in\mathbb{F}_q\backslash\{0,-a\}$ be a solution of \eqref{c-piecewise} for a fixed $b\in\mathbb{F}_q$, then $x,\,x+a\ne 0$. We discuss  \eqref{c-piecewise} from the following two cases.

Case I. $x\in D_0$. Note that  $(1-c)(u+l-1)\ne0$ and $(1-c)(u+l-1)-l\ne0$ due to $c\ne 1,\, u\ne (1-l)\, {\rm mod} \,p$ and $1-\frac{l}{(1-c)(u+l-1)}\in D_0$.  Hence, we have

(i) if $x+a\in D_0$, then  $$x=\frac{b-a(u+l-1)}{(1-c)(u+l-1)}\,\,\,\,\mbox{and}\,\,\,\, x+a=\frac{b-ac(u+l-1)}{(1-c)(u+l-1)};$$

(ii) if $x+a\in\cup^{l-1}_{i=1}D_i$, then  $$x=\frac{b-a(u-1)}{(1-c)(u+l-1)-l}\,\,\,\,\mbox{and}\,\,\,\, x+a=\frac{b-ac(u+l-1)}{(1-c)(u+l-1)-l}$$ according to  \eqref{f-piecewise} and \eqref{c-piecewise}.

We claim that \eqref{c-piecewise}  has at most one solution in $D_0$. Suppose there exist two solutions $x_1$, $x_2\in D_0$ with $x_1+a\in D_0$ and $x_2+a\in \cup^{l-1}_{i=1}D_i$, then $\frac{x_1+a}{x_2+a}=1-\frac{l}{(1-c)(u+l-1)}\notin D_0$, which  contradicts with our conditions. Therefore, \eqref{c-piecewise}  has at most one solution in Case I.

Case II. $x\in\cup^{l-1}_{i=1}D_i$. Similar as Case I,  if $x+a\in D_0$, then
$$x=\frac{b-a(u+l-1)}{(1-c)(u-1)+l} \,\,\,\,\mbox{and}\,\,\,\,x+a=\frac{b-ac(u-1)}{(1-c)(u-1)+l}$$ and otherwise, one gets $$x=\frac{b-a(u-1)}{(1-c)(u-1)}\,\,\,\,\mbox{and}\,\,\,\,x+a=\frac{b-ac(u-1)}{(1-c)(u-1)}$$ from  \eqref{f-piecewise} and \eqref{c-piecewise}.  We then can claim that \eqref{c-piecewise} has at most one solution in $\cup^{l-1}_{i=1}D_i$ due to $1+l/(1-c)(u-1)\in D_0$.

 With the previous discussion, we conclude that  \eqref{c-piecewise} has at most two solutions in $\mathbb{F}_{q}\backslash\{0,\,-a\}.$ Besides, it can be easily proved that $x=0$ and $x=-a$ cannot be the solutions of \eqref{c-piecewise} simultaneously. Otherwise, if it has solutions 0 and $-a$, then we have $F(a)=b=cF(a)$ which leads to  $(1-c)F(a)=0$. This is impossible since $c\ne1$ and $F(a)\ne0$ for $a\ne0$.
 In the following, we discuss the number of solutions of \eqref{c-piecewise} when it has solution $x=0$ or $x=-a$.

 Suppose that $x=-a$ is a solution of  \eqref{c-piecewise}, i.e.,   $b=cF(a)$, then  \eqref{c-piecewise} has no solution in $D_0$ when $a\in D_0$ since $b=ac(u+l-1)$ and it has no solution in $\cup^{l-1}_{i=1}D_i$ when $a\in \cup^{l-1}_{i=1}D_i$ since $b=ac(u-1)$.

Suppose  $x=0$ is a solution of  \eqref{c-piecewise}, which implies  $b=F(a)$. One can see that when $a\in D_0$, then $b=a(u+l-1)$ which leads to \eqref{c-piecewise} has no solution in Case I(i). Moreover, $a\in D_0$ gives
 $$\frac{b-ac(u+l-1)}{(1-c)(u+l-1)-l}=\frac{a(1-c)(u+l-1)}{(1-c)(u+l-1)-l}\in D_0$$ due to  $1-\frac{l}{(1-c)(u+l-1)}\in D_0$ which  contradicts with $x+a\in\cup^{l-1}_{i=1}D_i$ in Case I(ii). Therefore, \eqref{c-piecewise} has no solution in Case I. Similarly, it has no solution in Case II when $a\in\cup^{l-1}_{i=1}D_i$.
 This completes the proof.
\end{proof}

\begin{example}Taking $q=5^4$, $l=4$ and $u=\omega$. Let  $c\in\mathbb{F}_{q}\backslash\{1\}$ satisfy $1-\frac{l}{(1-c)(u+l-1)}$, $1+\frac{l}{(1-c)(u-1)}\in D_0$, or $1+\frac{cl}{(1-c)(u+l-1)}$, $1-\frac{cl}{(1-c)(u-1)}\in D_0$, Magma shows that  the function $F(x)$ defined in Theorem \ref{f-cyclotomic} is AP$c$N.
\end{example}

\begin{example} Taking $q=2^8$, $l=3$ and $u=\omega$. Let  $c\in\mathbb{F}_{q}\backslash\{1\}$ satisfy $1-\frac{l}{(1-c)(u+l-1)}$, $1+\frac{l}{(1-c)(u-1)}\in D_0$, or $1+\frac{cl}{(1-c)(u+l-1)}$, $1-\frac{cl}{(1-c)(u-1)}\in D_0$, Magma shows that  the function $F(x)$ defined in Theorem \ref{f-cyclotomic} is P$c$N if $c=0$ and AP$c$N, otherwise.
\end{example}

Li, Helleseth and Tang in  \cite{LHT}  investigated a class of permutation polynomials with  the form of
\begin{equation}\label{x^p^k-x+L(x)}
f(x)=\left(x^{p^{k}}-x+\delta\right)^{s}+L(x)
\end{equation}
where $k,\, s$ are integers, $\delta \in \mathbb{F}_{p^m}$ and $L(x)$ is a linearized polynomial.
Motivated by their work, we study the $c$-differential uniformity of  the function with form of \eqref{x^p^k-x+L(x)}. As a result,   a class of AP$c$N functions is obtained as follows.

\begin{thm} \label{linear function-2} Let $m,\,k$ be positive integers with $m=3k$. Let $q=3^m$ and $F(x)=(x^{p^k}-x)^{\frac{q-1}{2}+1}+a_1x+a_2x^{p^k}+a_3x^{p^{2k}}$.   If $a_1,\,a_2,\,a_3\in \mathbb{F}_{3}$ with $a_1+a_2+a_3\neq 0$, then ${}_c\Delta_F\leq2$ for $c=-1$.
\end{thm}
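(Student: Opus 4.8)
The plan is to exploit that, in characteristic $3$ with $c=-1$, the function $F$ is piecewise $\mathbb{F}_{p^k}$-linear and the $c$-derivative collapses to a sum. Write $\phi(x)=x^{p^k}-x$ and $L(x)=a_1x+a_2x^{p^k}+a_3x^{p^{2k}}$, and let $\eta(z)=z^{(q-1)/2}$ be the quadratic character of $\mathbb{F}_q$ (with $\eta(0)=0$), so that $z^{(q-1)/2+1}=\eta(z)z$. Then $F(x)=\eta(\phi(x))\phi(x)+L(x)$, which takes the three $\mathbb{F}_{p^k}$-linear forms $L(x)$, $L_+(x):=\phi(x)+L(x)$, and $L_-(x):=-\phi(x)+L(x)$ according as $\phi(x)=0$ (i.e. $x\in\mathbb{F}_{p^k}$), $\phi(x)$ is a nonzero square, or $\phi(x)$ is a nonsquare. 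Since the coefficients lie in $\mathbb{F}_3\subseteq\mathbb{F}_{p^k}$, all of $\phi,L,L_\pm$ are $\mathbb{F}_{p^k}$-linear and pairwise commuting (they are $\mathbb{F}_3$-combinations of powers of the $p^k$-Frobenius), a fact I will use repeatedly.

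Next I would expand the $c$-derivative. As $c=-1$, we have ${}_cD_aF(x)=F(x+a)+F(x)$, and additivity gives $L(x+a)+L(x)=2L(x)+L(a)=-L(x)+L(a)$ together with $\phi(x+a)=\phi(x)+\phi(a)$. I would then split $\mathbb{F}_q$ according to the pair of ``types'' (kernel / $\phi$-square / $\phi$-nonsquare) of $x$ and of $x+a$. In each resulting case the equation $F(x+a)+F(x)=b$ becomes a single $\mathbb{F}_{p^k}$-linear equation $2M(x)=R$, where $M$ is one of the operators $L,L_+,L_-$ or a sum $L+L_\pm$, and $R$ is an explicit constant depending only on $a$ and $b$ (the relevant constants being $b-L(a)$ and $b-L(a)\pm\phi(a)$).

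The third step is to show that each such linear equation has a unique solution. For a $q$-polynomial with $q=p^k$ over $\mathbb{F}_{p^{3k}}$, bijectivity is equivalent to its conventional $q$-associate being coprime to $t^3-1$; and in characteristic $3$ one has $t^3-1=(t-1)^3$, so coprimality reduces to the associate not vanishing at $t=1$. Evaluating the associates of $2L$, $2L_\pm$, and $2L\pm\phi$ at $t=1$ yields in every case $2(a_1+a_2+a_3)$, which is nonzero exactly under the hypothesis $a_1+a_2+a_3\neq0$. Hence all operators that arise are permutations of $\mathbb{F}_q$, and each case contributes at most one candidate solution.

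The final—and hardest—step is to prove that at most two candidates survive the requirement that $x$ and $x+a$ genuinely have the assumed types. I would separate the analysis by whether $\phi(a)=0$. If $\phi(a)=0$ (that is, $a\in\mathbb{F}_{p^k}^{*}$), then $x$ and $x+a$ always share a type and only three cases remain, all with the common right-hand side $R=b-L(a)$; applying $\phi$ to $2M(x)=R$ and using commutativity shows that the kernel-candidate is admissible precisely when $R\in\mathbb{F}_{p^k}$, in which case the square- and nonsquare-candidates collapse onto it (their $\phi$-values vanish) and there is a single solution, while otherwise only the square- and nonsquare-candidates can be admissible, giving at most two. The delicate case is $\phi(a)\neq0$: here I would track the four generic candidates of types $SS,NN,SN,NS$ through the relation $\phi(x_\tau)=2M_\tau^{-1}(\phi(R_\tau))$ obtained by applying $\phi$, and then show, via the quadratic-character conditions on $\eta(\phi(x_\tau))$ and $\eta(\phi(x_\tau)+\phi(a))$, that they are mutually exclusive enough for at most two to hold, treating the few boundary candidates with $x$ or $x+a\in\mathbb{F}_{p^k}$ separately (in the spirit of the $x=0,-a$ bookkeeping in the proof of Theorem \ref{f-cyclotomic}). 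I expect this character bookkeeping for $\phi(a)\neq0$ to be the main obstacle; once it is settled, ${}_c\Delta_F\le2$ follows.
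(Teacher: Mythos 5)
Your setup is sound and in places cleaner than the paper's: writing $F(x)=\eta(\phi(x))\phi(x)+L(x)$ with $\phi(x)=x^{p^k}-x$ and reducing each type-pair to a single linear equation $N(x)=R$ with $N\in\{2L,\ 2L\pm\phi,\ 2L\pm2\phi\}$ is exactly the right decomposition, and your bijectivity argument via the conventional $q$-associate (every associate that arises equals $2(a_1+a_2+a_3)\neq 0$ at $t=1$, and $t^3-1=(t-1)^3$ in characteristic $3$) replaces in one stroke the paper's six explicit eliminations by Frobenius conjugation. Your treatment of the subcase $\phi(a)=0$ is also essentially correct and consistent with the paper's (which in fact gets at most \emph{one} solution there by observing that the three candidate values of $\phi(x)$ are Frobenius conjugates and hence share a quadratic character).

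The proposal nevertheless has a genuine gap exactly where the paper's proof does its real work. For $\phi(a)\neq 0$ there are six admissible type-pairs, hence six candidate solutions, and the entire content of the theorem is that at most two of them can simultaneously satisfy their membership constraints. The paper establishes this by introducing $\Delta=b^{p^{2k}}-b^{p^k}+a^{p^k}-a$, computing $\phi(x)$ and $\phi(x+a)$ explicitly in each of Cases I--VI, and running a delicate incompatibility analysis (Case I forces $\Delta=0$, which kills Cases III and V and forces $a\in\mathbb{F}_{p^k}$ in Case II; Cases IV and VI exclude each other via the character of $\Delta^{p^{2k}}$; and so on). You state only that you ``would show'' the candidates are ``mutually exclusive enough'' and that you ``expect this character bookkeeping to be the main obstacle.'' That is an accurate diagnosis but not a proof: without this step your argument yields only ${}_c\Delta_F\leq 6$. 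The exclusion analysis is not routine bookkeeping that can be waved through --- it depends on the explicit forms of $\phi(x_\tau)$ and $\phi(x_\tau)+\phi(a)$, on the hypothesis $a\notin\mathbb{F}_{p^k}$ to rule out certain coincidences, and on the interplay between $\Delta$ and its Frobenius conjugates --- so the central claim of the theorem remains unproved in your write-up.
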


\begin{proof} We only give the proof of the case $(a_1,a_2,a_3)=(0,1,1)$ since the others can be similarly proved. Let $(a_1,a_2,a_3)=(0,1,1)$ and $D_0=\langle \omega^2 \rangle$, then $F(x)$ can be expressed as
\begin{equation}\label{Linear-function-iii}
F(x)=\left\{\begin{array}{ll}
2x, & x^{p^k}-x=0;\\
2x+2x^{p^k}+x^{p^{2k}}, & x^{p^k}-x\in D_0;\\
x+x^{p^{2k}},& x^{p^k}-x\in D_1.
\end{array}\right.
\end{equation}
By Definition \ref{def}, for $c=-1$, it is sufficient to prove that the equation
\begin{equation}\label{c-differential-Linear-function-ii}
F(x+a)+F(x)=b
\end{equation}
has at most two solutions in $\mathbb{F}_q$ for any given $(a,b)\in \mathbb{F}_q^2$.

When $a\in\mathbb{F}_{p^k}$,  then \eqref{c-differential-Linear-function-ii} can be reduced to $F(x)=-b-a.$  Therefore, it is equivalent to proving $F(x)=b$ has at most two solutions for any $b\in\mathbb{F}_q$. (i) If $x^{p^k}-x=0$, from \eqref{Linear-function-iii}, one has $x=-b$ and $x^{p^k}-x=b-b^{p^k}$. (ii) If $x^{p^k}-x\in D_0$, then $F(x)=b$ can be reduced to $2x+2x^{p^k}+x^{p^{2k}}=b$. Adding this equation to its  $p^k$-th power equation, one can immediately obtain $x=b^{p^{2k}}+b$. Hence, in this case, one has $x^{p^k}-x=b^{p^k}-b^{p^{2k}}$. (iii) If $x^{p^k}-x\in D_1$, by combining equations $F(x)=b$, $F(x^{p^k})=b^{p^k}$ and $F(x^{p^{2k}})=b^{p^{2k}}$, one has $x=b^{p^{2k}}-b^{p^k}-b$ and $x^{p^k}-x=b^{p^{2k}}-b$. Observe that $(b-b^{p^k})^{p^{2k}}=(b^{p^k}-b^{p^{2k}})^{p^k}=b^{p^{2k}}-b$. Therefore, \eqref {c-differential-Linear-function-ii} has at most one solution when $a\in\mathbb{F}_{p^k}$.

When $a\notin\mathbb{F}_{p^k}$ and $a^{p^k}-a\in D_0$. One should note that $(x+a)^{p^k}-(x+a)=0$ indicates $x^{p^k}-x\in D_0$ when $-1$ is a square element in $\mathbb{F}_q$ and otherwise, $x^{p^k}-x\in D_1$. Without loss of generality,  suppose that $-1$ is a square element of $\mathbb{F}_q$ and $x$ is a solution to \eqref{c-differential-Linear-function-ii}, then we discuss \eqref{c-differential-Linear-function-ii} as follows:

Case I: $x^{p^k}-x=0$ and $(x+a)^{p^k}-(x+a)\in D_0$. One can directly calculate $x=b+a+a^{p^k}-a^{p^{2k}}$ and  $x^{p^k}-x=b^{p^k}-b-a^{p^{2k}}+a$.

Case II: $x^{p^k}-x\in D_0$ and $(x+a)^{p^k}-(x+a)=0$.  According to \eqref{Linear-function-iii} and \eqref{c-differential-Linear-function-ii}, one has $x+2x^{p^k}+x^{p^{2k}}=b+a$. Taking $p^k$-th power  on both sides of the equation gives $x^{p^k}+2x^{p^{2k}}+x=b^{p^k}+a^{p^k}$. Combining the two equations, one has $x=-b^{p^k}-b-a^{p^k}-a$ and $(x+a)^{p^k}-(x+a)=-b^{p^{2k}}+b+a^{p^{k}}-a^{p^{2k}}$.

Case III: $x^{p^k}-x\in D_0$ and $(x+a)^{p^k}-(x+a)\in D_0$. For this case, \eqref{c-differential-Linear-function-ii} is reduced to $x+x^{p^k}-x^{p^{2k}}=b+a+a^{p^k}-a^{p^{2k}}$. Raising both sides of the equation  to the power $p^k$, we have $-x+x^{p^k}+x^{p^{2k}}=b^{p^k}+a^{p^k}+a^{p^{2k}}-a$. These two equations lead to $x=-b^{p^{2k}}-b+a$ and $x^{p^k}-x=b^{p^{2k}}-b^{p^k}+a^{p^k}-a$.

Case IV: $x^{p^k}-x\in D_0$ and $(x+a)^{p^k}-(x+a)\in  D_1$. By \eqref{c-differential-Linear-function-ii}, one has $2x^{p^k}+2x^{p^{2k}}=b-a-a^{p^{2k}}$. Taking $p^k$-th power and $p^{2k}$-th power on both sides of the equation respectively, we have $2x^{p^{2k}}+2x=b^{p^k}-a^{p^k}-a$ and $2x+2x^{p^k}=b^{p^{2k}}-a^{p^{2k}}-a^{p^k}$. Combining the three equations gives $x=b^{p^{2k}}+b^{p^{k}}-b+a^{p^{k}}$. Therefore, $x^{p^k}-x=b^{p^{k}}-b+a^{p^{2k}}-a^{p^k}$ and $(x+a)^{p^k}-(x+a)=b^{p^{k}}-b+a^{p^{2k}}-a$.

Case V: $x^{p^k}-x\in  D_1$ and $(x+a)^{p^k}-(x+a)\in D_0$.  In this case, one has $2x^{p^k}+2x^{p^{2k}}=b+a+a^{p^k}-a^{p^{2k}}$. Similar as Case IV, one can obtain
 $x=b^{p^{2k}}+b^{p^{k}}-b-a^{p^k}-a$ and  $x^{p^k}-x=b^{p^{k}}-b-a^{p^{2k}}+a$.

Case VI: $x^{p^k}-x\in D_1$ and $(x+a)^{p^k}-(x+a)\in  D_1$. For this case, \eqref{c-differential-Linear-function-ii} is reduced to $2x+2x^{p^{2k}}=b-a-a^{p^{2k}}$. Performing as  before, we then have $x=b+b^{p^{k}}-b^{p^{2k}}+a$. Hence, we have $x^{p^k}-x=b-b^{p^{2k}}+a^{p^{k}}-a$ and $(x+a)^{p^k}-(x+a)=b-b^{p^{2k}}-a^{p^{k}}+a$.

Denote by $\Delta=b^{p^{2k}}-b^{p^k}+a^{p^k}-a$. We claim that the other cases cannot happen if Case I occurs. Indeed, if there is a solution in Case I, then we have $\Delta=0$ and $\Delta^{p^k}=0$ which indicate Cases III and V cannot happen. If \eqref{c-differential-Linear-function-ii} has a solution in Case II, then we can get $a^{p^k}=a$ due to $\Delta^{p^k}=0$ and $-b^{p^{2k}}+b+a^{p^{k}}-a^{p^{2k}}=0$. It  contradicts with $a\notin \mathbb{F}_{p^k}$. If Case IV or Case VI  happens, then $(x+a)^{p^k}-(x+a)=b^{p^{k}}-b+a^{p^{2k}}-a\in  D_1$ or $(x+a)^{p^k}-(x+a)=b-b^{p^{2k}}-a^{p^{k}}+a\in  D_1$, respectively.  This together with $\Delta^{p^{2k}}=0$, we always have $a-a^{p^{2k}}\in D_1$ which is impossible due to $a^{p^k}-a\in D_0$.

By a similar discussion as above, one can check that if Case II happens, then the other cases cannot happen. On the other hand,
If Case III occurs, then $\Delta\in D_0$ and $\Delta^{p^{2k}}=b^{p^{k}}-b-a^{p^{2k}}+a\in D_0$. It means that Case V can not happen. If case IV happens, then $x^{p^k}-x=b^{p^{k}}-b+a^{p^{2k}}-a^{p^k}\in D_0$ and $(x^{p^k}-x)^{p^{2k}}=b-b^{p^{2k}}+a^{p^{k}}-a\in D_0$. It implies that only one of Cases IV and VI can occur. From the previous discussions, we conclude that for any given $(a,b)\in \mathbb{F}_q^2$, if  $a^{p^k}-a\in D_0$, then at most two  of the above cases can occur simultaneously, i.e., \eqref{c-differential-Linear-function-ii} has at most two
solutions in $\mathbb{F}_q$.

When $a\notin\mathbb{F}_{p^k}$ and $a^{p^k}-a\in D_1$, the proof is similar to the case $a^{p^k}-a\in D_0$ and hence, we omit it here. This completes the proof.
\end{proof}

\begin{remark} Note that $F(x)^{p^{2k}}=(-1)^{\frac{q-1}{2}+1}(x^{p^{2k}}-x)^{\frac{q-1}{2}+1}+a_1x^{p^{2k}}+a_2x+a_3x^{p^{k}}$. Therefore, Theorem \ref{linear function-2} can be generalized as below:
Let $m,\,k,\,d$ be positive integers with $m=3d$ and $k=d$ or $k=2d$. Let $q=3^m$ and $F(x)=(x^{p^k}-x)^{\frac{q-1}{2}+p^{ik}}+a_1x+a_2x^{p^k}+a_3x^{p^{2k}}\in\mathbb{F}_{q}[x]$, where $0\leq i\leq 2$ is an integer. If $a_1,\,a_2,\,a_3\in \mathbb{F}_3$ with $a_1+a_2+a_3\ne 0$, then ${ }_{-1}\Delta_F\leq2$.
\end{remark}

\begin{example} Let $a_1=a_2=-1$ and $a_3=1$. Let $m=3d$ and $k=2d$ with $d=2$. Experiments show that $f(x)=(x^{p^k}-x)^{\frac{q-1}{2}+p^{2k}}-x-x^{p^k}+x^{p^{2k}}$ is an AP$c$N function over $\mathbb{F}_{3^m}$ if $c=-1$.
\end{example}

\begin{example} Let $a_1=a_2=1$ and $a_3=-1$. Let $m=3d$ and $k=2d$ with $d=1$. Experiments show that $f(x)=(x^{p^k}-x)^{\frac{q-1}{2}+p^{2k}}+x+x^{p^k}-x^{p^{2k}}$ is a P$c$N function over $\mathbb{F}_{3^m}$ for $c=-1$.
\end{example}

In what follows,  several classes of  P$c$N and AP$c$N functions are presented by virtue of the switch method. We firstly state a proposition as shown below.

\begin{prop} Let $q=2^m$ and $ \gamma\in\mathbb{F}_q^*$. Let $f(x)\in\mathbb{F}_q[x]$ be differentially $(c, \delta)$-uniform and $F(x)=f(x)\left(\Tr^q_1(x)+1\right)+f(x+\gamma)\Tr^q_1(x)$.
Then $F(x)$ is differentially $(c, \delta)$-uniform for $\Tr^q_1(\gamma)=0$, and otherwise, ${}_c\Delta_{F}\leq2\delta$.
\end{prop}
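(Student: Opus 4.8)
The plan is to first put $F$ into a transparent piecewise form, exploiting that $\Tr^q_1$ is $\mathbb{F}_2$-valued and that we work in characteristic $2$. Since $\Tr^q_1(x)\in\{0,1\}$ and $1+1=0$, a direct check gives $F(x)=f(x)$ when $\Tr^q_1(x)=0$ and $F(x)=f(x+\gamma)$ when $\Tr^q_1(x)=1$; compactly, $F(x)=f\bigl(x+\Tr^q_1(x)\gamma\bigr)$. Because $-c=c$ over a field of characteristic $2$, the $c$-differential equation to analyze is $F(x+a)+cF(x)=b$ for a fixed $(a,b)\in\mathbb{F}_q^2$, and the goal is to bound its number of solutions.

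Next I would split the solution set according to $s:=\Tr^q_1(x)\in\{0,1\}$, using the linearity relation $\Tr^q_1(x+a)=s+\Tr^q_1(a)$. On the branch $s=0$ I keep $z=x$, and on the branch $s=1$ I substitute $z=x+\gamma$; in both branches the equation collapses (after using $2\gamma=0$) to the single $c$-differential equation ${}_cD_{a'}f(z)=f(z+a')+cf(z)=b$ for $f$, with the shift unified to $a'=a+\Tr^q_1(a)\gamma$ independently of the branch. The only data differing between the branches are the admissible traces of $z$: the branch $s=0$ contributes exactly the solutions $z$ with $\Tr^q_1(z)=0$, while the branch $s=1$ contributes the solutions with $\Tr^q_1(z)=1+\Tr^q_1(\gamma)$. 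Each substitution is a bijection from its trace half-space, so no solution is created or lost beyond this trace bookkeeping.

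The conclusion then falls out by comparing the two admissible trace values. If $\Tr^q_1(\gamma)=0$, the branches pick up the solutions with $\Tr^q_1(z)=0$ and with $\Tr^q_1(z)=1$ respectively, so together they count every solution of ${}_cD_{a'}f(z)=b$ exactly once; hence $\#\{x:{}_cD_aF(x)=b\}=\#\{z:{}_cD_{a'}f(z)=b\}\le\delta$. To upgrade this to the exact value ${}_c\Delta_F=\delta$, I would note that $a\mapsto a'=a+\Tr^q_1(a)\gamma$ preserves each trace class and is therefore a bijection of $\mathbb{F}_q$ (the identity on $\{\Tr^q_1=0\}$, a translation on $\{\Tr^q_1=1\}$), so the multiset of $F$-counts over $(a,b)$ coincides with that of $f$ and the maximum is $\delta$. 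If instead $\Tr^q_1(\gamma)=1$, both branches select the solutions with $\Tr^q_1(z)=0$, and each such $z$ yields two distinct preimages $x=z$ and $x=z+\gamma$ of opposite trace; thus $\#\{x:{}_cD_aF(x)=b\}=2\,\#\{z:\Tr^q_1(z)=0,\ {}_cD_{a'}f(z)=b\}\le2\delta$. Equivalently, when $\Tr^q_1(\gamma)=1$ one has the identity $F(x+\gamma)=F(x)$, so solutions come in pairs $\{x,x+\gamma\}$ and the count is automatically even and at most $2\delta$.

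The computation is largely mechanical; the steps that need genuine care are the characteristic-$2$ cancellations that produce the clean piecewise description and, more importantly, the trace bookkeeping that records which half-space of $z$ each branch feeds. This bookkeeping is exactly what yields the dichotomy: complementary halves when $\Tr^q_1(\gamma)=0$ (giving the sharp $\delta$) versus a doubled single half when $\Tr^q_1(\gamma)=1$ (giving $2\delta$). I expect the only non-routine point to be the lower-bound matching in the $\Tr^q_1(\gamma)=0$ case, namely verifying that $a\mapsto a+\Tr^q_1(a)\gamma$ is a bijection, so that the maximal $f$-count is genuinely attained by $F$ and the conclusion is $(c,\delta)$-uniformity rather than merely ${}_c\Delta_F\le\delta$.
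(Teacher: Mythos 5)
Your proposal is correct and follows essentially the same route as the paper: the piecewise description $F(x)=f(x+\Tr^q_1(x)\gamma)$, a split according to $\Tr^q_1(x)$, and the substitution $z=x+\gamma$ on the trace-one branch, with the dichotomy on $\Tr^q_1(\gamma)$ deciding whether the two branches feed complementary trace classes of $z$ (count $\le\delta$) or the same one twice (count $\le 2\delta$). Your only departures are cosmetic but welcome: you unify the paper's two cases on $\Tr^q_1(a)$ via the single shift $a'=a+\Tr^q_1(a)\gamma$, and you supply the bijectivity of $a\mapsto a'$ needed to conclude that the uniformity is exactly $\delta$ (not merely $\le\delta$) when $\Tr^q_1(\gamma)=0$, a point the paper leaves implicit.
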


\begin{proof} Firstly, $F(x)$ can be expressed as
\begin{equation}
F(x)=\left\{\begin{array}{ll}
f(x), & \Tr^q_1(x)=0;\\
f(x+\gamma), & \Tr^q_1(x)=1.
\end{array}\right.
\end{equation}
Therefore, for the equation  $F(x+a)+cF(x)=b$, we have

Case I:  $a\in\mathbb{F}_q$ and $\Tr^q_1(a)=0$.
(i) When $\Tr^q_1(x)=0$, then it becomes $f(x+a)+cf(x)=b$;
(ii) when $\Tr^q_1(x)=1$, we have $f(x+\gamma+a)+cf(x+\gamma)=b$.

If $\Tr^q_1(\gamma)=0$, let $y=x+\gamma$, then (ii) is equivalent to $f(y+a)+cf(y)=b$ and $\Tr^q_1(y)=1$. Combining with (i), one can see that in this case, $F(x+a)+cF(x)=b$ is actually $f(x+a)+cf(x)=b$ with $\Tr^q_1(a)=0$.

If $\Tr^q_1(\gamma)=1$,  then  $x$ is a solution of (i) if and only if $x+\gamma$ is a solution of (ii).  Therefore, the equation has at most $2\delta$ solutions in this case since  $f(x)$ is $(c,\delta)$-differential uniformity.

Case II: $a\in\mathbb{F}_q$ and $\Tr^q_1(a)=1$. Then,
(i) when $\Tr^q_1(x)=0$,  $F(x+a)+cF(x)=b$ can be reduced to $f(x+a+\gamma)+cf(x)=b$;
(ii) when $\Tr^q_1(x)=1$,  one has $f(x+a)+cf(x+\gamma)=b$.

If $\Tr^q_1(\gamma)=0$, let $y=x+\gamma$, then (ii) is equivalent to $f(y+a+\gamma)+cf(y)=b$ and $\Tr^q_1(y)=1$. This together with (i),  $F(x+a)+cF(x)=b$ is actually $f(x+a)+cf(x)=b$ with $\Tr^q_1(a)=1$. Therefore, Cases I and II imply $F(x)$ is differentially $(c, \delta)$-uniform for $\Tr^q_1(\gamma)=0$.

The case $\Tr^q_1(\gamma)=1$ can be similarly proved. This completes the proof.
\end{proof}

With the help of the above proposition,  P$c$N and AP$c$N functions over finite fields with even characteristic can be obtained from known P$c$N functions.  Therefore, we will continue constructing  P$c$N and AP$c$N functions later.

Hasan et al. in \cite{HPRS}  characterized when the sum of a P$c$N and a Boolean function is also P$c$N as follows:

\begin{thm}\label{EA}\cite[Theorem 6.3]{HPRS} Let $1 \neq c \in \mathbb{F}_{q}$ be fixed, $G$ be a P$c$N function and $f$:  $\mathbb{F}_{q}\rightarrow\mathbb{F}_{p}$. Then $G+\gamma f$ is P$c$N if and only if the following conditions are satisfied:

(i) When $p=2$,  $\gamma$ is a $0$-linear structure of ${ }_{c} D_{a} f \circ\left({ }_{c} D_{a} G\right)^{-1}$ for all $a$.

(ii) When $p$ is odd,  for any $\lambda \in \mathbb{F}_{q}$ with $\operatorname{Tr}^q_1(\gamma \lambda)=\beta \in \mathbb{F}_{p}^{*},$
$$
\mathcal{W}_{R_{a}}(-\lambda, \beta)=\sum_{y \in \mathbb{F}_{p} n} \zeta^{\operatorname{Tr}\left(\beta R_{a}(y)+\lambda y\right)}=0
$$
where $\zeta$ is a $p$-root of unity, $R_{a}=H_{a} \circ \left({ }_{c} D_{a} G\right)^{-1}$ and $_{c}D_{a} f(x)=\operatorname{Tr}^q_1\left(H_{a}(x)\right)$.
\end{thm}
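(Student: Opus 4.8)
The plan is to reduce the P$c$N property of $F = G + \gamma f$ to a permutation question about a single translated map, and then to analyze that map separately in even and odd characteristic. First I would use the characterization recalled in the introduction: $F$ is P$c$N if and only if ${}_cD_a F$ is a permutation of $\mathbb{F}_q$ for every $a \in \mathbb{F}_q$. Since ${}_cD_a$ is additive in its argument, ${}_cD_a F(x) = {}_cD_a G(x) + \gamma\,{}_cD_a f(x)$. Because $G$ is P$c$N, the map $P_a := {}_cD_a G$ is already a bijection, so I would compose on the right with $P_a^{-1}$: the map ${}_cD_a F$ permutes $\mathbb{F}_q$ if and only if $\psi_a := {}_cD_a F \circ P_a^{-1}$ does. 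Writing ${}_cD_a f = \operatorname{Tr}_1^q(H_a)$ and $R_a = H_a \circ P_a^{-1}$, this yields $\psi_a(y) = y + \gamma\,t_a(y)$ with $t_a := \operatorname{Tr}_1^q(R_a) = {}_cD_a f \circ ({}_cD_a G)^{-1}$, an $\mathbb{F}_p$-valued function. Everything then reduces to deciding, for each $a$, when $\psi_a$ permutes $\mathbb{F}_q$.

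Next I would record the basic injectivity constraint. If $\psi_a(y+\delta) = \psi_a(y)$ for some $\delta \neq 0$, then $\delta = -\gamma\bigl(t_a(y+\delta) - t_a(y)\bigr) \in \gamma\,\mathbb{F}_p$, so the only candidate differences are $\delta = \gamma s$ with $s \in \mathbb{F}_p^*$. Hence $\psi_a$ fails to be injective precisely when there exist $s \in \mathbb{F}_p^*$ and $y \in \mathbb{F}_q$ with $t_a(y + \gamma s) - t_a(y) = -s$. This single equation is the common engine for both cases.

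For $p = 2$ the analysis collapses cleanly: the only choice is $s = 1$, and the equation $t_a(y+\gamma) - t_a(y) = 1$ has no solution exactly when $t_a(y+\gamma) = t_a(y)$ for all $y$, i.e.\ when $\gamma$ is a $0$-linear structure of $t_a = {}_cD_a f \circ ({}_cD_a G)^{-1}$; imposing this for every $a$ gives condition (i). For odd $p$ the several values of $s$ interact, so rather than the difference equation I would invoke the additive-character criterion for permutations: $\psi_a$ permutes $\mathbb{F}_q$ if and only if $\sum_{y \in \mathbb{F}_q}\zeta^{\operatorname{Tr}_1^q(\lambda\,\psi_a(y))} = 0$ for every $\lambda \in \mathbb{F}_q^*$. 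The crucial algebraic step is to exploit that $t_a$ is $\mathbb{F}_p$-valued: since $t_a(y) \in \mathbb{F}_p$ and $\operatorname{Tr}_1^q$ is $\mathbb{F}_p$-linear, $\operatorname{Tr}_1^q(\lambda\gamma\,t_a(y)) = \operatorname{Tr}_1^q(\lambda\gamma)\,t_a(y) = \beta\,t_a(y)$ with $\beta := \operatorname{Tr}_1^q(\gamma\lambda)$. Substituting $\psi_a(y) = y + \gamma t_a(y)$ and $t_a = \operatorname{Tr}_1^q(R_a)$ then turns the character sum into exactly $\mathcal{W}_{R_a}(-\lambda, \beta) = \sum_{y} \zeta^{\operatorname{Tr}_1^q(\beta R_a(y) + \lambda y)}$.

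Finally I would dispose of the redundant characters. When $\beta = \operatorname{Tr}_1^q(\gamma\lambda) = 0$ the sum reduces to $\sum_y \zeta^{\operatorname{Tr}_1^q(\lambda y)}$, which already vanishes for $\lambda \neq 0$ because the character is nontrivial; such $\lambda$ impose no restriction. Therefore $\psi_a$ is a permutation if and only if $\mathcal{W}_{R_a}(-\lambda,\beta) = 0$ for every $\lambda$ with $\beta = \operatorname{Tr}_1^q(\gamma\lambda) \in \mathbb{F}_p^*$, and requiring this for all $a$ is condition (ii). The main obstacle is the odd-characteristic translation: one must see that the $\mathbb{F}_p$-valuedness of $t_a$ is exactly what lets $\operatorname{Tr}_1^q$ pull the scalar $\beta$ out of $\lambda\gamma\,t_a(y)$, producing a genuine Walsh coefficient of $R_a$, and one must separately verify that the characters with $\beta = 0$ contribute nothing, so that the stated condition is both necessary and sufficient.
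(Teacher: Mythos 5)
This statement is quoted verbatim from \cite[Theorem 6.3]{HPRS} and the paper gives no proof of it, so there is nothing internal to compare your argument against. Your proof is correct and is surely close to the intended one: the reduction of the P$c$N property of $G+\gamma f$ to the question of whether $y\mapsto y+\gamma\,t_a(y)$ with $t_a={}_cD_af\circ({}_cD_aG)^{-1}$ permutes $\mathbb{F}_q$, the difference-equation treatment for $p=2$, and the additive-character criterion for odd $p$ (which is what makes the Walsh coefficient $\mathcal{W}_{R_a}(-\lambda,\beta)$ appear, after using the $\mathbb{F}_p$-linearity of the trace to pull $\beta=\operatorname{Tr}_1^q(\gamma\lambda)$ out) are exactly the mechanisms the statement's form suggests, and your observation that the characters with $\beta=0$ vanish automatically correctly explains why only $\beta\in\mathbb{F}_p^*$ is constrained. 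The one point you should make explicit is where the $\mathbb{F}_p$-valuedness of $t_a$ comes from: it is not automatic from $f$ being $\mathbb{F}_p$-valued, since ${}_cD_af(x)=f(x+a)-cf(x)$ involves $c\in\mathbb{F}_q$; it is encoded in the hypothesis ${}_cD_af=\operatorname{Tr}_1^q(H_a)$ (and, as the paper's own remark after Theorem 4 notes, effectively forces $c\in\mathbb{F}_p$ for odd $p$). You use this $\mathbb{F}_p$-valuedness twice --- to conclude that a collision difference must lie in $\gamma\mathbb{F}_p$, and to extract the scalar $\beta$ from the trace --- so flagging its source would close the only loose end.
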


As mentioned in the end of \cite{HPRS}, it is  interesting to find  $G$ and $f$ such that $G+\gamma f$ is a P$c$N function. The next result is a direct consequence of Theorem \ref{EA}.
\begin{cor}\label{x+Tr(x^(p^k+1))} Let $k$, $m$ be any two integers and $d=\gcd(2^k+1,2^{2m}-1)$. Let  $q=2^m$ and  $F(x)=x+\gamma\Tr^{q^2}_1(x^{2^k+1})\in\mathbb{F}_{q^2}[x]$, where $\gamma\in\mathbb{F}_{q^2}$ and $\gamma^d=1$. If $c\in\mathbb{F}_{q^2}\backslash\{1\}$ satisfies either $c=0$ or $c^d=1$,  then $F(x)$ is a P$c$N function of $\mathbb{F}_{q^2}$.
\end{cor}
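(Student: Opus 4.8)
The plan is to place the statement inside the framework of Theorem \ref{EA} by taking $G(x)=x$ and $f(x)=\Tr^{q^2}_1(x^{2^k+1})$, so that $G+\gamma f=F$. First I would record that $G(x)=x$ is P$c$N for every admissible $c$, since $_cD_a G(x)=(1-c)x+a$ is affine with nonzero leading coefficient (as $c\neq 1$), hence a permutation for each $a$; and that $f$ is $\mathbb{F}_2$-valued because we are in characteristic $2$. Theorem \ref{EA}(i) then reduces P$c$N-ness of $F$ to a linear-structure condition on the $c$-derivatives, which I would make explicit through a direct collision analysis of $_cD_a F$.

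Concretely, $_cD_a F(x)=(1+c)x+a+\gamma\bigl(f(x+a)+cf(x)\bigr)$ in characteristic $2$, so a collision $_cD_a F(x_1)={}_cD_a F(x_2)$ with $x_1\neq x_2$ yields $(1+c)(x_1+x_2)=\gamma(u+cv)$, where $u=f(x_1+a)+f(x_2+a)$ and $v=f(x_1)+f(x_2)$ both lie in $\mathbb{F}_2$. Since $c\neq 1$, running over $(u,v)\in\mathbb{F}_2^2$ forces $(u,v)\in\{(1,0),(0,1),(1,1)\}$ with the corresponding difference $\delta:=x_1+x_2$ equal to $\gamma/(1+c)$, $c\gamma/(1+c)$, or $\gamma$ respectively (the case $(0,0)$ gives $\delta=0$). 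Thus the entire statement comes down to excluding these three ``bad'' differences.

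The key computation is the directional derivative of $f$: expanding $(x+\delta)^{2^k+1}$ and using Frobenius-invariance of the trace gives $f(x+\delta)+f(x)=\Tr^{q^2}_1\!\bigl(g(\delta)x\bigr)+h(\delta)$ with $g(\delta)=\delta^{2^{2m-k}}+\delta^{2^k}$ and $h(\delta)=\Tr^{q^2}_1(\delta^{2^k+1})$. The crucial lemma is that $g(\delta)=0$ exactly when $\delta^{2^{2k}-1}=1$, i.e. $\delta\in\mathbb{F}_{2^{2k}}^*$: raising $g(\delta)=0$ to the $2^k$-th power and using $\delta^{2^{2m}}=\delta$ reduces it to $\delta=\delta^{2^{2k}}$. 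When $g(\delta)=0$ the derivative $f(x+\delta)+f(x)\equiv h(\delta)$ is constant, hence $u=v$ for that $\delta$; this instantly kills the cases $(1,0)$ and $(0,1)$ (which demand $u\neq v$), while for $\delta=\gamma$ it forces $u=v=h(\gamma)$, killing $(1,1)$ as soon as $h(\gamma)=0$.

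The main obstacle, and where the hypotheses enter, is verifying that all three bad differences lie in $\mathbb{F}_{2^{2k}}^*$ and that $h(\gamma)=0$. Here I would exploit the chain $d\mid 2^k+1\mid 2^{2k}-1$: from $\gamma^d=1$ I get both $\gamma^{2^{2k}-1}=1$ and $\gamma^{2^k+1}=1$, whence $h(\gamma)=\Tr^{q^2}_1(1)=0$ since $2m$ is even; and from $c=0$ or $c^d=1$ I get $c\in\mathbb{F}_{2^{2k}}$, so $1+c\in\mathbb{F}_{2^{2k}}^*$ (using $c\neq 1$), and therefore each of $\gamma/(1+c)$, $c\gamma/(1+c)$, $\gamma$ lies in $\mathbb{F}_{2^{2k}}^*$ and $g$ vanishes on all of them. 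The one delicate branch is $c=0$, where the difference attached to $(0,1)$ degenerates to $\delta=0$ and must be discarded (it only yields $x_1=x_2$) rather than excluded via $g$, while the remaining two cases again reduce to $\delta=\gamma$ and close as above. Assembling these observations, no collision survives, so $_cD_a F$ is a permutation for every $a$ and $F$ is P$c$N.
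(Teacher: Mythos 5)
Your argument is correct, and it is considerably more explicit than what the paper offers: the paper states this corollary with no proof at all, merely asserting that it is ``a direct consequence of Theorem \ref{EA}'' (i.e.\ one is meant to take $G(x)=x$, $f(x)=\Tr^{q^2}_1(x^{2^k+1})$ and check that $\gamma$ is a $0$-linear structure of ${}_cD_af\circ({}_cD_aG)^{-1}$ for all $a$). You announce the same decomposition but then bypass the black box entirely and run a self-contained collision analysis of ${}_cD_aF$. All the steps check out: the reduction of a collision to $(1+c)\delta=\gamma(u+cv)$ with $u,v\in\mathbb{F}_2$; the identity $f(x+\delta)+f(x)=\Tr^{q^2}_1\bigl((\delta^{2^{2m-k}}+\delta^{2^k})x\bigr)+\Tr^{q^2}_1(\delta^{2^k+1})$; the equivalence $g(\delta)=0\Leftrightarrow\delta^{2^{2k}}=\delta$ (obtained by raising to the $2^k$-th power, a bijection); the divisibility chain $d\mid 2^k+1\mid 2^{2k}-1$ giving $\gamma^{2^{2k}}=\gamma$, $\gamma^{2^k+1}=1$ (hence $h(\gamma)=\Tr^{q^2}_1(1)=0$ as $2m$ is even) and $c^{2^{2k}}=c$, so that all three candidate differences $\gamma/(1+c)$, $c\gamma/(1+c)$, $\gamma$ satisfy $\delta^{2^{2k}}=\delta$; and the separate treatment of $c=0$, where the $(0,1)$ branch degenerates to $\delta=0$. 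What your route buys is a verifiable, elementary proof that exposes exactly where each hypothesis ($\gamma^d=1$, $c^d=1$ or $c=0$) is used, whereas the paper's citation leaves the linear-structure verification to the reader. The only cosmetic caveat is that for $k>2m$ the exponent $2^{2m-k}$ should be read modulo the Frobenius order (replace $k$ by $k\bmod 2m$), but this does not affect the argument.
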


Next, we present a class of P$c$N functions which is not covered by Theorem \ref{EA}.

\begin{thm}\label{x+Tr(x)^(q-1)} Let $n$, $m$ be two integers and $q=p^m$ for any prime $p$. Let $L(x)$ be a linearized permutation polynomial over $\mathbb{F}_{q^n}$ with coefficients in $\mathbb{F}_{q}$ and  $F(x)=L(x)+L(\gamma)\Tr^{q^n}_q(x)^{q-1}\in \mathbb{F}_{q^n}[x]$, where $\gamma\in\mathbb{F}_{q^n}^*$ satisfies $\Tr^{q^n}_q(\gamma)=0$. Then $f(x)$ is a P$c$N function for any $c\in \mathbb{F}_{q^n}\backslash\{1\}$ and $\Tr^{q^n}_q\big(\frac{L(\gamma)}{1-c}\big)=0$.
\end{thm}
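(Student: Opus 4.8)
The plan is to use the characterization recalled after Definition~\ref{def}: the function $F$ is P$c$N precisely when its $c$-derivative ${}_cD_aF(x)=F(x+a)-cF(x)$ is a permutation of $\mathbb{F}_{q^n}$ for every $a\in\mathbb{F}_{q^n}$, and since $\mathbb{F}_{q^n}$ is finite it suffices to prove injectivity. Write $T=\Tr^{q^n}_q$ and $\beta=L(\gamma)$. The first step is to record the piecewise shape of $F$: since $t^{q-1}=1$ for $t\in\mathbb{F}_q^*$ and $t^{q-1}=0$ for $t=0$, we have $F(x)=L(x)$ when $T(x)=0$ and $F(x)=L(x)+\beta$ when $T(x)\neq0$. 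Two elementary facts will be used repeatedly. Because $L$ has coefficients in $\mathbb{F}_q$ it commutes with the $q$-power Frobenius, so $T(L(z))=\ell\,T(z)$ with $\ell=L(1)\in\mathbb{F}_q^*$ (here $\ell\neq0$ because $L$ is a bijection); in particular $T(\beta)=\ell\,T(\gamma)=0$ by the assumption $\Tr^{q^n}_q(\gamma)=0$. The hypothesis of the theorem supplies the second relation $T\big(\tfrac{\beta}{1-c}\big)=0$.

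Next I would compute the $c$-derivative. Using additivity of $L$ and of $T$ together with the piecewise form, one obtains
\begin{equation*}
{}_cD_aF(x)=(1-c)L(x)+L(a)+\psi\big(T(x)\big),
\end{equation*}
where $\psi\colon\mathbb{F}_q\to\mathbb{F}_{q^n}$ is a step function whose value is determined by which of $V_0=\{x:T(x)=0\}$ and its complement contain $x$ and $x+a$. I would then split on $s:=T(a)$. If $s=0$ then $x$ and $x+a$ lie on the same side of $V_0$, and $\psi$ takes only the two values $0$ (on $T(x)=0$) and $(1-c)\beta$ (on $T(x)\neq0$). If $s\neq0$ there are three regions $R_1\colon T(x)=0$, $R_2\colon T(x)\notin\{0,-s\}$, $R_3\colon T(x)=-s$, on which $\psi$ equals $\beta$, $(1-c)\beta$, and $-c\beta$ respectively.

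The heart of the argument is injectivity. Suppose ${}_cD_aF(x_1)={}_cD_aF(x_2)$ and set $t_i=T(x_i)$. If $x_1,x_2$ lie in the same region then $\psi(t_1)=\psi(t_2)$, the $L(a)$ terms cancel, and we get $(1-c)L(x_1-x_2)=0$, hence $x_1=x_2$ since $c\neq1$ and $L$ is injective. If they lie in different regions, the collision rearranges to $L(x_1-x_2)=\kappa\beta$ with $\kappa=\frac{\psi(t_2)-\psi(t_1)}{1-c}$; across the three region pairs the scalar $\kappa$ equals, up to sign, $\frac{c}{1-c}$, $\frac{1+c}{1-c}$, or $\frac{1}{1-c}$, each an $\mathbb{F}_p$-combination of $1$ and $\frac{1}{1-c}$. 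Applying $T$ and using $T(L(x_1-x_2))=\ell\,T(x_1-x_2)$ together with $T(\beta)=0$ and $T(\beta/(1-c))=0$ forces $T(x_1-x_2)=0$, i.e.\ $t_1=t_2$. But for two distinct regions the defining constraints make $t_1-t_2$ equal to a nonzero quantity (one of $-t_2$, $s$, or $t_1+s$), a contradiction. The case $s=0$ is identical, with the single cross-region scalar $\kappa=1$ and the relation $T(\beta)=0$. Hence no cross-region collision occurs, ${}_cD_aF$ is injective, and $F$ is P$c$N.

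I expect the main obstacle to be exactly this cross-region bookkeeping: one must verify that the single hypothesis $T(L(\gamma)/(1-c))=0$, combined with the automatic relation $T(L(\gamma))=0$, simultaneously annihilates $T(\kappa\beta)$ for every region pair. The computation reduces to the identities $\frac{c}{1-c}=\frac{1}{1-c}-1$ and $\frac{1+c}{1-c}=\frac{2}{1-c}-1$, after which $T(\kappa\beta)=0$ is immediate from linearity of the trace; the sign of $\kappa$ is irrelevant since $T(-z)=-T(z)$. The only point demanding care is the characteristic-$2$ specialization, where some of these scalars collapse (e.g.\ $\frac{1+c}{1-c}=1$ since $2=0$), but the conclusion survives because $T(\beta)=0$ already suffices in those degenerate pairs.
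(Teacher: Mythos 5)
Your proof is correct and follows essentially the same route as the paper's: both rest on the piecewise form of $F$ determined by $\Tr^{q^n}_q(x)$, the same split into regions according to whether $\Tr^{q^n}_q(x)$ equals $0$, equals $-\Tr^{q^n}_q(a)$, or neither, and the same two trace identities $\Tr^{q^n}_q(L(\gamma))=0$ and $\Tr^{q^n}_q\big(\frac{L(\gamma)}{1-c}\big)=0$; the paper merely packages the exclusion of cross-region solutions as the dichotomy $\Tr^{q^n}_q\big(\frac{b-L(a)}{1-c}\big)=0$ versus $\neq 0$ for a fixed $b$, whereas you phrase it as a collision $x_1\neq x_2$ forcing $\Tr^{q^n}_q(x_1-x_2)=0$. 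One small slip: for a linearized polynomial with coefficients in $\mathbb{F}_q$ that is not a $q$-polynomial (e.g.\ the paper's own example $L(x)=x^3+x$ over $\mathbb{F}_{q^3}$ with $q=3^3$) one only has $\Tr^{q^n}_q(L(z))=L\big(\Tr^{q^n}_q(z)\big)$, not $L(1)\,\Tr^{q^n}_q(z)$; this does not damage your argument, since $L$ restricted to $\mathbb{F}_q$ is an injective additive map, so both of your uses of the identity --- obtaining $T(\beta)=0$ and deducing $t_1=t_2$ from $T(L(x_1-x_2))=0$ --- survive verbatim.
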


\begin{proof} Note that $\Tr^{q^n}_q(L(x))=L(\Tr^{q^n}_q(x))$ since $L(x)$ is a linearized  polynomial over $\mathbb{F}_{q^n}$ whose coefficients belong to $\mathbb{F}_{q}.$  By Definition \ref{def}, it suffices to prove the equation
\begin{equation}\label{x+y^(q-1)-cdifferentially}
F(x+a)-cF(x)=b
\end{equation}
has at most one solution for any $(a,b)\in\mathbb{F}_{q^n}^2$, where $c\in \mathbb{F}_{q^n}\backslash\{1\}$ with $\Tr^{q^n}_q\left(\frac{L(\gamma)}{1-c}\right)=0$. Note that
\begin{equation}
F(x)=\left\{\begin{array}{ll}
L(x), & \Tr^{q^n}_q(x)=0;\\
L(x+\gamma), & \Tr^{q^n}_q(x)\ne 0.
\end{array}\right.
\end{equation}
Therefore, we discuss \eqref{x+y^(q-1)-cdifferentially} as follows.

Case I: $\Tr^{q^n}_q(a)=0$. If $x$ is a solution of \eqref{x+y^(q-1)-cdifferentially} and $\Tr^{q^n}_q(x)=0$, then one has $F(x+a)-cF(x)=(1-c)L(x)+L(a)=b$ due to $\Tr^{q^n}_q(x+a)=0$. It leads to $L(x)=\frac{b-L(a)}{1-c}$ and $$L(\Tr^{q^n}_q(x))=\Tr^{q^n}_q(L(x))=\Tr^{q^n}_q\left(\frac{b-L(a)}{1-c}\right)=0.$$
If $x$ is a solution of \eqref{x+y^(q-1)-cdifferentially} with $\Tr^{q^n}_q(x)\neq0$, then $\Tr^{q^n}_q(x+a)\neq0$. In this subcase, \eqref{x+y^(q-1)-cdifferentially} can be reduced to $(1-c)L(x+\gamma)+L(a)=b$, which induces $L(x)=\frac{b-L(a)}{1-c}-L(\gamma)$ and $$L(\Tr^{q^n}_q(x))=\Tr^{q^n}_q(L(x))=\Tr^{q^n}_q\left(\frac{b-L(a)}{1-c}-L(\gamma)\right)=\Tr^{q^n}_q\left(\frac{b-L(a)}{1-c}\right)\neq 0$$
 due to $\Tr^{q^n}_q(L(\gamma))=L(\Tr^{q^n}_q(\gamma))=0$ and $L(x)$ being a linearized permutation polynomial. Therefore, one can see that \eqref{x+y^(q-1)-cdifferentially} has at most one solution in $\mathbb{F}_{q^n}$ since for any fixed $(a,b)\in\mathbb{F}_{q^n}^2$, one  has either
$\Tr^{q^n}_q\left(\frac{b-L(a)}{1-c}\right)=0$ or $\Tr^{q^n}_q\left(\frac{b-L(a)}{1-c}\right)\neq 0$.

Case II: $\Tr^{q^n}_q(a)\neq0$. If $\Tr^{q^n}_q(x)=0$, one then has  $\Tr^{q^n}_q(x+a)\neq0$, then \eqref{x+y^(q-1)-cdifferentially} is equivalent to $(1-c)L(x)+L(a+\gamma)=b$. Hence, one has $L(x)=\frac{b-L(a+\gamma)}{1-c}$ and
$$L(\Tr^{q^n}_q(x))=\Tr^{q^n}_q(L(x))=\Tr^{q^n}_q\left(\frac{b-L(a+\gamma)}{1-c}\right)= \Tr^{q^n}_q\left(\frac{b-L(a)}{1-c}\right)=0$$ due to $\Tr^{q^n}_q\big(\frac{L(\gamma)}{1-c}\big)=0$. If $\Tr^{q^n}_q(x)=-\Tr^{q^n}_q(a)\neq0$, then $\Tr^{q^n}_q(x+a)=0$. This leads to  $F(x+a)-cF(x)=(1-c)L(x)+L(a)-cL(\gamma)=b$. Thus, one has $L(x)=\frac{b-L(a)+cL(\gamma)}{1-c}$,
$$L(\Tr^{q^n}_q(x))=\Tr^{q^n}_q\Big(\frac{b-L(a)+cL(\gamma)}{1-c}\Big)= \Tr^{q^n}_q\Big(\frac{b-L(a)}{1-c}\Big)\neq0$$
and
$$L(\Tr^{q^n}_q(x+a))=\Tr^{q^n}_q(L(x)+L(a))=\Tr^{q^n}_q\Big(\frac{b-cL(a)+cL(\gamma)}{1-c}\Big)= \Tr^{q^n}_q\Big(\frac{b-cL(a)}{1-c}\Big)=0.$$
The above two equations hold due to $\Tr^{q^n}_q\left(\frac{cL(\gamma)}{1-c}\right)=\Tr^{q^n}_q\left(\frac{L(\gamma)}{1-c}-L(\gamma)\right)=0$.
If $\Tr^{q^n}_q(x)\neq0$ and $\Tr^{q^n}_q(x)\neq -\Tr^{q^n}_q(a)$, then \eqref{x+y^(q-1)-cdifferentially} is reduced to $(1-c)L(x+\gamma)+L(a)=b$. This induces $L(x)=\frac{b-L(a)}{1-c}-L(\gamma)$,
$$L(\Tr^{q^n}_q(x))=\Tr^{q^n}_q\left(\frac{b-L(a)}{1-c}-L(\gamma)\right)=\Tr^{q^n}_q\left(\frac{b-L(a)}{1-c}\right)\neq0$$ and $$L(\Tr^{q^n}_q(x+a))=\Tr^{q^n}_q\left(L(x)+L(a)\right)=\Tr^{q^n}_q\left(\frac{b-cL(a)}{1-c}\right)\neq0.$$ Similarly as Case I, one can conclude that \eqref{x+y^(q-1)-cdifferentially} has at most one solution in this case. This completes the proof.
\end{proof}

\begin{remark}  One should note that when $p$ is an odd prime, the condition $_{c}D_{a} f(x)=\operatorname{Tr}^q_1\left(H_{a}(x)\right)$ in Theorem \ref{EA} implies $c\in\mathbb{F}_{p}\backslash\{1\}$. Therefore,  $F(x)$ is a P$c$N function only for $c\in\mathbb{F}_{p}\backslash\{1\}$ in  Theorem \ref{EA} which
indicates our result is not covered by Theorem \ref{EA}.
\end{remark}

\begin{remark} Observe that  Proposition 1  holds only for even characteristic while Theorem \ref{x+Tr(x)^(q-1)} is considered for any characteristic $p$. On the other hand, let $f(x)=L(x)$ in Proposition 1, where $L(x)$ is a linearized permutation polynomials, then $F(x)=L(x)+L(\gamma)\Tr^q_1(x)$. One can check that this is a special case of Theorem \ref{x+Tr(x)^(q-1)} and otherwise, Theorem \ref{x+Tr(x)^(q-1)} and    Proposition 1 don't intersect.
\end{remark}

\begin{example} Selecting $q=5^2$ and $n=2$, Magma shows that for any $\gamma\in\mathbb{F}_{q^n}^*$ with $\Tr^{q^n}_q(\gamma)=0$,  $F(x)=x+\gamma\Tr^{q^n}_q(x)^{q-1}\in \mathbb{F}_{q^n}[x]$ is a P$c$N function if $c\in \mathbb{F}_{q^n}\backslash\{1\}$ satisfies $\Tr^{q^n}_q(\frac{\gamma}{1-c})=0$.
\end{example}

\begin{example} Let $q=3^3$ and $n=3$, experiments show that $F(x)=x^3+x+\gamma\Tr^{q^n}_q(x)^{q-1}\in \mathbb{F}_{q^n}[x]$ is a P$c$N function for any $c\in \mathbb{F}_{q^n}\backslash\{1\}$ and $\Tr^{q^n}_q(\frac{\gamma^3+\gamma}{1-c})=0$, where $\gamma\in\mathbb{F}_{q^n}^*$ with $\Tr^{q^n}_q(\gamma)=0$.
\end{example}

In the following, we  propose two classes of AP$c$N functions.

\begin{thm}\label{x^(p^k+1)+Tr(x)} Let $m$, $k$ be two positive integers with $\frac{m}{d}$ odd, where $d=\gcd(m,k)$.  Let $q=2^m$, $\gamma\in\mathbb{F}_{q}^*$ and $F(x)=x^{2^k+1}+\gamma\Tr^q_1(x)$. If $c\in\mathbb{F}_{2^d}\backslash\{1\}$, then ${}_c\Delta_F\leq2$.
\end{thm}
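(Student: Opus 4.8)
The plan is to set $e=1+c$ (working in characteristic $2$, so $-c=c$) and reduce the counting of solutions of ${}_cD_aF(x)=b$ to a Gold-type equation. First I would expand, using additivity of the trace and $(x+a)^{2^k+1}=x^{2^k+1}+ax^{2^k}+a^{2^k}x+a^{2^k+1}$, to obtain
$$
{}_cD_aF(x)=(1+c)x^{2^k+1}+ax^{2^k}+a^{2^k}x+a^{2^k+1}+\gamma\Tr^q_1(a)+(1+c)\gamma\Tr^q_1(x).
$$
Writing $e=1+c$ and moving constants across, the equation ${}_cD_aF(x)=b$ becomes
$$
e x^{2^k+1}+ax^{2^k}+a^{2^k}x+e\gamma\Tr^q_1(x)=B,\qquad B:=b+a^{2^k+1}+\gamma\Tr^q_1(a).
$$
Since $c\ne1$ we have $e\ne0$, and since $c\in\mathbb{F}_{2^d}$ we have $e\in\mathbb{F}_{2^d}^*$; in particular $e^{2^k}=e$ because $d\mid k$. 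The key structural point is that $\Tr^q_1(x)\in\{0,1\}$, so I would split into the two subcases $\Tr^q_1(x)=0$ and $\Tr^q_1(x)=1$, in each of which the equation takes the constant-trace form $e x^{2^k+1}+ax^{2^k}+a^{2^k}x=C$, with $C=B$ and $C=B+e\gamma$ respectively.

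Next I would complete the square on the left-hand side. Using $e^{2^k}=e$ one verifies the identity
$$
e\Bigl(x+\tfrac{a}{e}\Bigr)^{2^k+1}=e x^{2^k+1}+ax^{2^k}+a^{2^k}x+\tfrac{a^{2^k+1}}{e},
$$
so that, setting $w=x+a/e$, the equation becomes $w^{2^k+1}=D$ where $D=C/e+a^{2^k+1}/e^{2}$ is a constant depending only on $a,b,c,\gamma$. This is the step where the hypothesis $c\in\mathbb{F}_{2^d}$ is essential: it is precisely the relation $e^{2^k}=e$ that makes $a/e$ and $a^{2^k}/e$ Frobenius-conjugate and lets the whole linear-plus-quadratic expression collapse into a single $(2^k+1)$-th power. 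I expect this completing-the-square identity to be the heart of the proof and the one point demanding care.

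Finally I would invoke the hypothesis that $m/d$ is odd. It is well known that $\gcd(2^k+1,2^m-1)=1$ exactly when $m/\gcd(m,k)=m/d$ is odd, so $w\mapsto w^{2^k+1}$ is a permutation of $\mathbb{F}_q$. Hence for each of the two constants $D$ the equation $w^{2^k+1}=D$ has a unique solution $w$, yielding a single candidate $x=w+a/e$ in each subcase; this candidate is retained only if it genuinely satisfies the trace assumption defining its subcase. Since the subcases $\Tr^q_1(x)=0$ and $\Tr^q_1(x)=1$ produce disjoint (hence distinct) candidates, the equation ${}_cD_aF(x)=b$ has at most two solutions for every $(a,b)$, including $a=0$, which the same computation covers. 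Therefore ${}_c\Delta_F\le2$. I would close by noting that the permutation property removes any dependence on whether the linear part $ax^{2^k}+a^{2^k}x$ degenerates, so no further case analysis on $a$ is required.
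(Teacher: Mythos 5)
Your argument is correct and is essentially the paper's own proof: both complete the $(2^k+1)$-th power using $(1+c)^{2^k}=1+c$ (which holds since $c\in\mathbb{F}_{2^d}$ and $d\mid k$), and both then use that $m/d$ odd forces $\gcd(2^k+1,2^m-1)=1$, so that each of the two trace subcases contributes at most one solution. The only cosmetic difference is that you make the substitution $w=x+a/(1+c)$ and the case split on $\Tr^q_1(x)$ fully explicit, whereas the paper leaves them implicit.
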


\begin{proof} Let $c\in\mathbb{F}_{2^d}\backslash\{1\}$, then $(1+c)^{2^k}=1+c$ and
 \begin{eqnarray*}
F(x+a)+cF(x)&=&(x+a)^{2^k+1}+\gamma\Tr^q_1(x+a)+c\left(x^{2^k+1}+\gamma\Tr^q_1(x)\right)\\
&=&(1+c)\Big(x^{2^k+1}+\frac{a}{1+c}x^{2^k}+\Big(\frac{a}{1+c}\Big)^{2^k}x\Big)+a^{2^k+1}\\&&+\gamma\left(\left(1+c\right)\Tr^q_1(x)+\Tr^q_1(a)\right)\\
&=&(1+c)\Big(x+\frac{a}{1+c}\Big)^{2^k+1}+\frac{ca^{2^k+1}}{1+c}+\gamma\left(\left(1+c\right)\Tr^q_1(x)+\Tr^q_1(a)\right).
 \end{eqnarray*}
  Since $\frac{m}{d}$ is odd, $\gcd(2^k+1,2^m-1)=1$. This leads to $F(x+a)+cF(x)=b$ has at most one solution for both $\Tr^q_1(x)=0$ and  $\Tr^q_1(x)=1$. Therefore, for any $(a,b)\in\mathbb{F}_{q}^2$,  $F(x+a)+cF(x)=b$ has at most two solution in $\mathbb{F}_{q}$.
\end{proof}

\begin{example}  Let $q=2^9$. For any $1\leq k\leq 8$, by a magma program,  $F(x)=x^{2^k+1}+\gamma\Tr^q_1(x)$ is an AP$c$N function of $\mathbb{F}_q$ for any $c\in\mathbb{F}_{2^d}\backslash\{1\}$, where $d=\gcd(m,k)$ and $\gamma\in\mathbb{F}_{q}^*$.
\end{example}

\begin{thm}
Let $m$ be an integer and $q=p^m$. Let $a_0$, $a_1\in\mathbb{F}_{q^2}$ with $a_1\neq a_0^{q}$. Then $F(x)=x^{q+1}+a_0x^q+a_1x$ is an AP$c$N function of $\mathbb{F}_{q^2}$ for any $c\in\mathbb{F}_{q}\backslash\{1\}$.
\end{thm}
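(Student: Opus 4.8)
The plan is to reduce the $c$-differential equation to a norm equation over $\mathbb{F}_{q^2}$ and then to a single quadratic over the subfield $\mathbb{F}_q$. Fix $c\in\mathbb{F}_q\setminus\{1\}$ and write $\lambda=1-c\in\mathbb{F}_q^*$; the structural fact I would exploit throughout is that $\lambda$ is fixed by the Frobenius $x\mapsto x^q$. First I would expand $F(x+a)-cF(x)=b$. Using $(x+a)^{q+1}=x^{q+1}+a^qx+ax^q+a^{q+1}$, the equation collapses to
\begin{equation*}
\lambda x^{q+1}+A x^q+B x=b',
\end{equation*}
where $A=a+\lambda a_0$, $B=a^q+\lambda a_1$ and $b'=b-(a^{q+1}+a_0a^q+a_1a)$ absorbs the constant terms. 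Dividing by $\lambda$ and completing the product via $x^{q+1}+\tfrac{A}{\lambda}x^q+\tfrac{B}{\lambda}x=(x+\tfrac{A}{\lambda})(x^q+\tfrac{B}{\lambda})-\tfrac{AB}{\lambda^2}$, I would substitute $y=x+\tfrac{A}{\lambda}$. Since $\lambda\in\mathbb{F}_q$ one computes $x^q+\tfrac{B}{\lambda}=y^q+\tfrac{B-A^q}{\lambda}$ with $B-A^q=\lambda(a_1-a_0^q)$, so the whole equation becomes
\begin{equation*}
y^{q+1}+uy=C,\qquad u:=a_1-a_0^q,
\end{equation*}
for a suitable constant $C\in\mathbb{F}_{q^2}$ depending on $a,b$.

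The decisive observation is that $y^{q+1}=N_{\mathbb{F}_{q^2}/\mathbb{F}_q}(y)\in\mathbb{F}_q$. By the hypothesis $a_1\neq a_0^q$ we have $u\neq 0$, so I can solve $uy=C-y^{q+1}$ to get $y=\tfrac{C-s}{u}$, where $s:=N(y)\in\mathbb{F}_q$ is a priori unknown. Imposing the consistency condition $N(y)=s$ and using $N(C-s)=(C-s)(C^q-s)$ (here $s^q=s$) turns everything into the single quadratic
\begin{equation*}
s^2-\big(\Tr^{q^2}_q(C)+N(u)\big)s+N(C)=0
\end{equation*}
with coefficients in $\mathbb{F}_q$. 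This quadratic has at most two roots $s\in\mathbb{F}_q$, and each root determines $y$ (hence $x$) uniquely via $y=(C-s)/u$; conversely one checks directly that every root yields a genuine solution, so the map $x\mapsto N(x+A/\lambda)$ is injective on the solution set. Therefore the equation has at most two solutions for every $(a,b)$, i.e. ${}_c\Delta_F\leq 2$.

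The main obstacle, and the place where the hypotheses really enter, is this passage to the quadratic in $s$, which hinges on two facts that must be checked carefully. First, $c\in\mathbb{F}_q$ (not merely $c\in\mathbb{F}_{q^2}$) is what makes $\lambda$ Frobenius-invariant and forces the clean identity $B-A^q=\lambda u$, so that the conjugate factor simplifies to $y^q+u$ governed by the single constant $u$; without it the two linear factors would not be controlled by one parameter. Second, $a_1\neq a_0^q$ is exactly the condition $u\neq 0$ that permits dividing by $u$ and recovering $y$ from its norm — if $u=0$ the reduction degenerates and the solution count can blow up. I would finish by confirming that the bound is attained, so that $F$ is genuinely AP$c$N and not P$c$N, by exhibiting an $(a,b)$ for which the quadratic in $s$ has two distinct admissible roots.
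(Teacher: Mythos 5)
Your reduction is correct and, up to the point where the equation becomes $y^{q+1}+uy=C$ with $u=a_1-a_0^{q}$, it coincides with the paper's: the paper uses the same substitution $x=y+t$ with $t=-\bigl(a_0+\frac{a}{1-c}\bigr)$ and the same identity $B-A^{q}=(1-c)(a_1-a_0^{q})$, and then merely normalizes further by $y=(a_1^{q}-a_0)z$ to reach $z^{q+1}+z=D$. The two arguments part ways only at the final counting step. The paper invokes \cite[Theorems 8 and 10]{KCM} as a black box to conclude that this equation has at most two solutions in $\mathbb{F}_{q^2}$, whereas you prove the bound directly: since $y^{q+1}=N(y)\in\mathbb{F}_q$ and $u\neq0$, every solution satisfies $y=(C-s)/u$ with $s=N(y)$ a root of the monic quadratic $s^2-\bigl(\Tr^{q^2}_q(C)+N(u)\bigr)s+N(C)=0$ over $\mathbb{F}_q$, so distinct solutions give distinct roots and there are at most two. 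This is a genuine and welcome simplification: it makes the proof self-contained, works uniformly in all characteristics, and in fact sets up a bijection between solutions $y$ and roots $s\in\mathbb{F}_q$ of the quadratic. Two small touch-ups. First, the injectivity of $y\mapsto N(y)$ on the solution set follows from the forward identity $y=(C-s)/u$, not from the converse verification you cite for it. Second, the attainment of the bound (needed to claim AP$c$N rather than merely ${}_c\Delta_F\leq2$) is only announced in your write-up, but it is immediate in your framework: choose $b$ so that $C=0$, whence the quadratic $s^2-N(u)s=0$ has the two distinct roots $s=0$ and $s=N(u)$, giving the two solutions $y=0$ and $y=-u^{q}$; this is exactly the paper's choice $b'=0$ with $z\in\{0,-1\}$.
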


\begin{proof} According to Definition \ref{def}, we consider the equation
\begin{equation}\label{x^(q+1)+L(x)-cdifferentially}
F(x+a)-cF(x)=b,
\end{equation}
where  $(a,b)\in\mathbb{F}_{q^2}^2$ and $c\in\mathbb{F}_{q}\backslash\{1\}$.
One can easily check that \eqref{x^(q+1)+L(x)-cdifferentially} can be reduced to
\begin{equation}\label{x^(q+1)+L(x)-cdifferentially-1}
(1-c)x^{q+1}+(a+a_0-ca_0)x^q+(a^q+a_1-ca_1)x=b-F(a).
\end{equation}
Let $x=y+t$, where $t=-\left(a_0+\frac{a}{1-c}\right)$. By the fact
  \begin{eqnarray*}
(1-c)t^q+a^q+a_1-ca_1&=&(1-c)\left(-a_0^q-\left(\frac{a}{1-c}\right)^q\right)+a^q+(1-c)a_1\\
&=&(1-c)(a_1-a_0^q)
  \end{eqnarray*}
  due to $c\in\mathbb{F}_{q}\backslash\{1\}$ and $a_1\neq a_0^{q}$, \eqref{x^(q+1)+L(x)-cdifferentially-1} is equivalent to
  $$(1-c)y^{q+1}+(1-c)(a_1^q-a_0)^qy=b^{'}$$
  where $b^{'}=b-F(a)-((1-c)(t+a_0)+a)t^q-((1-c)a_1+a^q)t$. Let $y=(a_1^q-a_0)z$, then the above equation can be reduced to
  \begin{equation}\label{x^(q+1)+L(x)-cdifferentially-2}
z^{q+1}+z=\frac{b^{'}}{(1-c)(a_1^q-a_0)^{q+1}}.
\end{equation}
By \cite[Theorems 8 and 10]{KCM}, one can  obtain \eqref{x^(q+1)+L(x)-cdifferentially-2} has at most two  solutions for any $(a,b)\in\mathbb{F}_{q^2}^2$. We claim that the number of the solutions to \eqref{x^(q+1)+L(x)-cdifferentially-2} can reach to $2$ for some $(a,b)\in\mathbb{F}_{q}^2$. For instance, selecting $b=F(a)+((1-c)(t+a_0)+a)t^q+((1-c)a_1+a^q)t$, then $b^{'}=0$. Right now,  \eqref{x^(q+1)+L(x)-cdifferentially-2} has solutions $x=0$ and $x=-1$. This completes the proof.
\end{proof}

\begin{example} Selecting $q=2^5$. Magma experiments show that if  $a_0$, $a_1\in\mathbb{F}_{q^2}$ and $a_1\neq a_0^{q}$,  then $F(x)=x^{q+1}+a_0x^q+a_1x$ is an AP$c$N function of $\mathbb{F}_{q^2}$ for any $c\in\mathbb{F}_{q}\backslash\{1\}$.
\end{example}

Inverse function, as a kind of well-known cryptographic function,  has been extensively studied for its differential  properties. By modifying two points of the inverse function, i.e., exchanging two image values of the inverse function, we can obtain a class of low $c$-differentially functions  as follows.

\begin{thm}\label{inverse function} Let $q=2^m$ and $t\in\mathbb{F}_q^*$. Let
\begin{equation}\label{inverse-piecewise}
F(x)=\left\{\begin{array}{ll}
0, & x=t;\\
t^{q-2}, & x=0; \\
x^{q-2},& otherwise.
\end{array}\right.
\end{equation}
If $c\in\mathbb{F}_q^*\backslash\{1\}$ and $\Tr^q_1(c)=\Tr^q_1(\frac{1}{c})=1$, then ${}_c\Delta_F\leq3$
\end{thm}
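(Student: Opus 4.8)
The plan is to exploit that $F$ is the inverse function $G(x)=x^{q-2}$ with its two output values at $0$ and $t$ interchanged: $F(0)=t^{q-2}=t^{-1}$, $F(t)=0$, and $F(x)=x^{-1}$ otherwise. In particular $F$ is a permutation of $\mathbb{F}_q$ (swapping two images of a bijection preserves bijectivity), so the case $a=0$ is immediate: $F(x+a)+cF(x)=(1+c)F(x)=b$ has exactly one solution since $1+c\neq0$. For $a\neq0$ I would split the solutions of $F(x+a)+cF(x)=b$ according to whether $x$ or $x+a$ meets the modified set $\{0,t\}$, that is, into the four \emph{special} points $x\in\{0,t,a,a+t\}$ and the \emph{generic} set $S=\mathbb{F}_q\setminus\{0,t,a,a+t\}$.

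On $S$ the function $F$ agrees with $G$, so there the equation reads $(x+a)^{-1}+cx^{-1}=b$; clearing denominators (legitimate since $x,x+a\neq0$ on $S$) turns it into the quadratic $Q(x)=bx^2+(ab+1+c)x+ca=0$ for $b\neq0$, which has at most two roots, while $b=0$ gives the single solution $x=ca/(1+c)$. Hence the generic contribution is at most two. When $a=t$ the special set collapses to $\{0,t\}$, whose two conditions $b=ct^{-1}$ and $b=t^{-1}$ are incompatible because $c\neq1$; this already yields at most $2+1=3$ solutions with no extra hypotheses, so I would dispose of $a=t$ first and assume $a\neq0,t$ thereafter (where $0,t,a,a+t$ are genuinely distinct).

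The heart of the argument is the special points. A direct evaluation of $\Phi_F(x):=F(x+a)+cF(x)$ gives $\Phi_F(0)=a^{-1}+ct^{-1}=:b_1$, $\Phi_F(a)=ca^{-1}+t^{-1}=:b_3$, $\Phi_F(t)=(a+t)^{-1}=:b_2$, and $\Phi_F(a+t)=c(a+t)^{-1}=:b_4$. Since $b_1+b_3=(1+c)(a^{-1}+t^{-1})\neq0$ and $b_2+b_4=(1+c)(a+t)^{-1}\neq0$, for a fixed $b$ at most one of $x\in\{0,a\}$ and at most one of $x\in\{t,a+t\}$ can solve the equation. Consequently two special solutions can occur only when $b$ is shared between the two groups, i.e. in one of the four cases $b_1=b_2$, $b_1=b_4$, $b_3=b_2$, $b_3=b_4$. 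The decisive computation is that each of these identities, after multiplying out by $at(a+t)$, forces $c$ or $c^{-1}$ into the shape $z+z^2$: for instance $b_1=b_2$ simplifies to $ca(a+t)=t^2$, whence $c^{-1}=(a/t)+(a/t)^2$ and $\Tr^q_1(1/c)=0$; the remaining three cases likewise yield $\Tr^q_1(1/c)=0$ or $\Tr^q_1(c)=0$.

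Under the hypothesis $\Tr^q_1(c)=\Tr^q_1(1/c)=1$ all four coincidences are therefore impossible, so at most one special point can be a solution; together with the at most two generic solutions, and since the generic and special solution sets are disjoint, this gives ${}_c\Delta_F\le3$. I expect the main obstacle to be precisely this special-point bookkeeping: recognizing that exceeding the inverse function's behaviour requires a value collision between the $\{0,a\}$ group and the $\{t,a+t\}$ group, and then verifying that the four resulting algebraic identities are exactly those excluded by the two trace conditions. Minor care is also needed for the degenerate subcases $b=0$ and $ab+1+c=0$ (where $Q$ degenerates to a single root), which only decrease the generic count.
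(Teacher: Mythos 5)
Your proposal is correct and follows essentially the same route as the paper: the case $a=0$ via bijectivity of $F$, the reduction on the generic set to the quadratic $bx^{2}+(ab+c+1)x+ac=0$ with at most two roots, and the exclusion of two simultaneous special solutions in $\{0,a,t,a+t\}$ by showing each cross\mbox{-}coincidence forces $c$ or $c^{-1}$ into the form $z+z^{2}$, contradicting $\Tr^q_1(c)=\Tr^q_1(1/c)=1$. The only (immaterial) divergence is at $a=t$, where you settle for the trivial bound $1+2\le 3$ while the paper additionally applies the trace conditions to the quadratic's discriminant to get $\le 2$ there.
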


\begin{proof} According to Definition \ref{def}, we consider the equation
\begin{equation}\label{inverse-cdifferentially}
F(x+a)+cF(x)=b.
\end{equation}
 If $a=0$, it becomes $(1+c)F(x)=b$ which has exactly one solution since $F(x)$ is a permutation. Next, we always assume $a\ne0$.

Case I:  $a=t$.  If \eqref{inverse-cdifferentially} has the solution $x=0$, then $b=ct^{-1}$. If $x=t$ is a solution, we then have $b=t^{-1}$. Therefore, $0$ and $t$ can not be solutions of \eqref{inverse-cdifferentially} simultaneously due to $c\neq1$. Suppose $x\ne 0,\,t$ which implies that $x+t\ne 0,\,t$. Therefore, \eqref{inverse-cdifferentially} is equivalent to $bx^2+(bt+c+1)x+ct=0$. When $b=0$ or $b=t^{-1}(1+c)$, it has only one solution. When $b\ne0,\,t^{-1}(1+c)$, by Lemma \ref{quadratic-solutions}, the equation has two solutions if and only if $\Tr^q_1\big(\frac{ctb}{(bt+c+1)^2}\big)=0$. Taking $b=ct^{-1}\,({\rm or}\,\, b=t^{-1})$, then $\operatorname{Tr}_{1}^{q}\big(\frac{c t b}{(b t+c+1)^{2}}\big)=\operatorname{Tr}_{1}^{q}(c)=1\big({\rm or}\,\operatorname{Tr}_{1}^{q}\big(\frac{ctb}{(b t+c+1)^{2}}\big)=\operatorname{Tr}_{1}^{q}\big(\frac{1}{c}\big)=1\big)$.  Therefore, \eqref{inverse-cdifferentially} has at most two solutions in this case.

Case II: $a\ne t$. Firstly, if $x=0$ or $x=a+t$ is a solution of \eqref{inverse-cdifferentially}, then $b=\frac{ac+t}{at}$ or $b=\frac{c}{a+t}$, respectively. If $x=a$ is a solution of  \eqref{inverse-cdifferentially}, we have $b=\frac{a+ct}{at}$, and $x=t$ is a solution of \eqref{inverse-cdifferentially} implies that $b=\frac{1}{a+t}$.   Observe that if \eqref{inverse-cdifferentially} has solutions $x=0$ or $x=a+t$, then one has $\frac{ac+t}{at}=b=\frac{c}{a+t}$. This leads to $t^2+at+a^2c=0$ which is impossible due to $\Tr^q_1(c)=1$. Similarly, one can conclude that \eqref{inverse-cdifferentially} has no more than one solution in the set $\{0,\,a,\,t,\,a+t\}$. In the following, we assume that $x\notin\{0,\,a,\,t,\,a+t\}$. Then \eqref{inverse-cdifferentially} becomes $(x+a)^{-1}+cx^{-1}=b$ which is equivalent to $bx^2+(ab+c+1)x+ac=0$. When $b=0$ or $b=a^{-1}(1+c)$, the equation has  exactly one solution. When $b\ne0,\,a^{-1}(1+c)$, by Lemma \ref{quadratic-solutions}, the equation has two solutions if and only if $\Tr^q_1\big(\frac{acb}{(ab+c+1)^2}\big)=0$. Therefore, there exist at most three solutions of \eqref{inverse-cdifferentially} if $a\ne t$.
This completes the proof.
\end{proof}

\begin{remark} Very recently, St$\breve{a}$nic$\breve{a}$ \cite{S} studied the inverse function by swapping $0$ and $1$ in even characteristic. He proved that if $m=2,$ then ${}_c \Delta_F \leq 1 ;$ if $m=3,$ then ${}_c \Delta_F \leq 3$ and otherwise, ${}_c \Delta_F \leq 4$. While, by swapping $0$ and any $t\in\mathbb{F}_q^*$, we characterize the conditions of $c$ such that ${}_c \Delta_F \leq 3$ for any $m$.   Hence, our result in Theorem \ref{inverse function} is not contained by the conclusion in  \cite{S}.
\end{remark}

\section{P$c$N and AP$c$N functions from (generalized) AGW criterion }\label{sec3}
In this section, motivated by the work of Bartoli and Calderini in \cite{BC}, we propose several classes of P$c$N and AP$c$N polynomials.  Firstly, we need to introduce some known results.
\begin{lem}{\rm(\cite{AGW})}\label{AGW} Let $\phi(x)$ and $\psi(x)$ be two $\mathbb{F}_{q}$-linear polynomials over $\mathbb{F}_{q}$ seen as endomorphisms of $\mathbb{F}_{q^{n}}$. Let $g \in \mathbb{F}_{q^{n}}[x]$ and $h \in \mathbb{F}_{q^{n}}[x]$ such that $h\left(\psi\left(\mathbb{F}_{q^{n}}\right)\right) \subseteq \mathbb{F}_{q}^*.$ Then
$$
f(x)=h(\psi(x)) \phi(x)+g(\psi(x))
$$
is a permutation polynomial of $\mathbb{F}_{q^{n}}$ if and only if the following two conditions hold:
\begin{enumerate}
  \item [(1)] $\operatorname{ker}(\phi) \cap \operatorname{ker}(\psi)=\{0\};$ and
  \item [(2)] $h(x) \phi(x)+\psi(g(x))$ permutes $\psi\left(\mathbb{F}_{q^{n}}\right)$.
\end{enumerate}
\end{lem}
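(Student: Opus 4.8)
The plan is to run the AGW-style fibering argument directly, taking the linear map $\psi$ itself as the projection onto the $\mathbb{F}_q$-subspace $V:=\psi(\mathbb{F}_{q^n})$ of $\mathbb{F}_{q^n}$. The decisive preliminary observation, which I would record first, is that $\phi$ and $\psi$ \emph{commute} under composition: since both are $\mathbb{F}_q$-linearized polynomials whose coefficients lie in $\mathbb{F}_q$ and are hence fixed by the $q$-power Frobenius, they sit in a commutative ring of linearized polynomials (isomorphic to $\mathbb{F}_q[X]$ via $x^{q^i}\mapsto X^i$), so $\psi\circ\phi=\phi\circ\psi$. This is exactly what will make the base map below well defined.

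Next I would compute the action of $\psi$ on $f$. Because $\psi$ is $\mathbb{F}_q$-linear and $h(\psi(x))\in\mathbb{F}_q^{*}$ by hypothesis, the scalar $h(\psi(x))$ pulls out of $\psi$, and using the commutativity just noted one gets
\[
\psi(f(x))=h(\psi(x))\,\psi(\phi(x))+\psi\big(g(\psi(x))\big)=h(\psi(x))\,\phi(\psi(x))+\psi\big(g(\psi(x))\big).
\]
Writing $s=\psi(x)$, the right-hand side depends only on $s$, so $\bar f(s):=h(s)\phi(s)+\psi(g(s))$ is a well-defined self-map of $V$ (it lands in $V$ because $V$ is an $\mathbb{F}_q$-subspace stable under $\phi$ and under $\psi$), and the square $\psi\circ f=\bar f\circ\psi$ commutes. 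Consequently $f$ carries each fiber $\psi^{-1}(s)$ into the fiber $\psi^{-1}(\bar f(s))$.

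The core step is the fiberwise analysis. On a fiber $\psi^{-1}(s)$ the value $\psi(x)=s$ is constant, so there $f(x)=h(s)\phi(x)+g(s)$ is an affine image of $\phi$; for $x,x'$ in the same fiber we have $x-x'\in\ker\psi$, and since $h(s)\neq 0$, $f(x)=f(x')$ holds iff $\phi(x-x')=0$. Hence $f$ is injective on every fiber precisely when $\ker\phi\cap\ker\psi=\{0\}$, which is condition $(1)$. I would then assemble the two halves by an equal-cardinality count: all fibers of $\psi$ have the same size $|\ker\psi|$, so if $\bar f$ permutes $V$ and $f$ is injective on each fiber, then $f$ sends fibers bijectively onto fibers and is a bijection of $\mathbb{F}_{q^n}$; conversely, a bijective $f$ is injective on each fiber, and chasing $\psi$ through the commuting square forces $\bar f$ to be surjective, hence bijective on the finite set $V$. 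Since condition $(2)$ says exactly that $h(x)\phi(x)+\psi(g(x))$ permutes $V=\psi(\mathbb{F}_{q^n})$, this gives the stated equivalence.

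The one genuinely load-bearing point is the commutativity $\psi\circ\phi=\phi\circ\psi$: without it $\psi(\phi(x))$ would fail to be a function of $\psi(x)$, the induced base map $\bar f$ would not be well defined, and the fiber-to-fiber structure on which the whole criterion rests would break down. Everything else is routine equal-fiber pigeonhole bookkeeping.
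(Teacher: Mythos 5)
The paper does not prove this lemma at all---it is imported verbatim from \cite{AGW} as a known tool---so there is no in-paper proof to compare against. Your argument is the standard proof of the additive AGW criterion and is correct and complete: the commutativity $\phi\circ\psi=\psi\circ\phi$ (valid because both linearized polynomials have coefficients in $\mathbb{F}_q$, so they lie in the commutative composition ring isomorphic to $\mathbb{F}_q[X]$) makes $\bar f$ a well-defined self-map of $\psi(\mathbb{F}_{q^n})$ with $\psi\circ f=\bar f\circ\psi$, the fiberwise injectivity is exactly condition $(1)$ since $h(s)\in\mathbb{F}_q^{*}$, and the equal-size-fiber count assembles the equivalence in both directions.
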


Lemma \ref{AGW} is the well-known AGW criterion in the additive case proposed by Akbary, Ghioca and Wang which is an useful method to construct permutation polynomials.
In 2019, Mesnager and Qu generalized AGW criterion to construct 2-to-1 mappings over finite fields as follows.

\begin{lem}{\rm(\cite{MQ})}\label{AGW-like} Let $q=2^{m}, \phi(x)$ and $\psi(x)$ be two $\mathbb{F}_{q}$-linear polynomials over $\mathbb{F}_{q}$ seen as an endomorphisms of the $\mathbb{F}_{q}$-module $\mathbb{F}_{q^{n}},$ and let $g, h \in \mathbb{F}_{q^{n}}[x]$ such that $h\left(\psi\left(\mathbb{F}_{q^{n}}\right)\right) \subseteq \mathbb{F}_{q}^{*}$. If
$\operatorname{ker}(\phi) \cap \operatorname{ker}(\psi)=\{0, \alpha\}$ for some $\alpha \in \mathbb{F}_{q^{n}}^{*},$ and
$\bar{f}(x)=h(x) \phi(x)+\psi(g(x))$ permutes $\psi\left(\mathbb{F}_{q^{n}}\right)$,
then
$$
f(x)=h(\psi(x)) \phi(x)+g(\psi(x))
$$
is 2-to-1 over $\mathbb{F}_{q^{n}}$.
\end{lem}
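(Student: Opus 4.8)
The plan is to run the same fibration argument that underlies the AGW criterion (Lemma~\ref{AGW}), but to replace the ``injective on each fibre'' conclusion, which there comes from $\operatorname{ker}(\phi)\cap\operatorname{ker}(\psi)=\{0\}$, by an ``exactly two preimages on each fibre'' conclusion coming from $\operatorname{ker}(\phi)\cap\operatorname{ker}(\psi)=\{0,\alpha\}$. Since $q^{n}=2^{mn}$ is even, it suffices to show that every nonempty fibre of $f$ has exactly two elements; concretely, I would prove the equivalence
$$
f(x_{1})=f(x_{2})\quad\Longleftrightarrow\quad x_{1}-x_{2}\in\{0,\alpha\}.
$$
The forward direction gives $|f^{-1}(f(x))|\le 2$ and the backward direction (using $\alpha\neq 0$) gives $|f^{-1}(f(x))|\ge 2$, so together they force $f$ to be $2$-to-$1$.

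First I would set up the intertwining relation between $f$ and $\bar f$ on $S:=\psi(\mathbb{F}_{q^{n}})$, namely $\psi\circ f=\bar f\circ\psi$. Expanding, $\psi(f(x))=\psi\!\big(h(\psi(x))\phi(x)\big)+\psi\!\big(g(\psi(x))\big)$; because $h(\psi(x))\in\mathbb{F}_{q}$ and $\psi$ is $\mathbb{F}_{q}$-linear, the first summand equals $h(\psi(x))\,\psi(\phi(x))$, and since $\phi,\psi$ are linearized polynomials over $\mathbb{F}_{q}$ they commute under composition, so $\psi(\phi(x))=\phi(\psi(x))$. Hence $\psi(f(x))=h(\psi(x))\phi(\psi(x))+\psi(g(\psi(x)))=\bar f(\psi(x))$, which is exactly $\psi\circ f=\bar f\circ\psi$; the same commutation shows $\phi(S)\subseteq S$, so the hypothesis that $\bar f$ \emph{permutes} $S$ is well posed.

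With this in hand the two implications are short. For the backward direction, I would note that $\alpha\in\operatorname{ker}(\psi)$ forces $\psi(x+\alpha)=\psi(x)$, hence $h(\psi(x+\alpha))=h(\psi(x))$ and $g(\psi(x+\alpha))=g(\psi(x))$, while $\alpha\in\operatorname{ker}(\phi)$ gives $\phi(x+\alpha)=\phi(x)$; substituting shows $f(x+\alpha)=f(x)$ for every $x$. For the forward direction, suppose $f(x_{1})=f(x_{2})$. Applying $\psi$ and the intertwining relation yields $\bar f(\psi(x_{1}))=\bar f(\psi(x_{2}))$, and since $\bar f$ is injective on $S$ we get $\psi(x_{1})=\psi(x_{2})=:s$, i.e.\ $x_{1}-x_{2}\in\operatorname{ker}(\psi)$. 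Then the identity $h(s)\phi(x_{1})+g(s)=h(s)\phi(x_{2})+g(s)$ together with $h(s)\in\mathbb{F}_{q}^{*}$ gives $\phi(x_{1})=\phi(x_{2})$, i.e.\ $x_{1}-x_{2}\in\operatorname{ker}(\phi)$. Therefore $x_{1}-x_{2}\in\operatorname{ker}(\phi)\cap\operatorname{ker}(\psi)=\{0,\alpha\}$, as required.

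The step I expect to be most delicate is the intertwining relation: one must simultaneously use that $h$ takes values in $\mathbb{F}_{q}^{*}$ (to pull the scalar through the $\mathbb{F}_{q}$-linear map $\psi$) and that $\phi$ and $\psi$ commute (to turn $\psi\circ\phi$ into $\phi\circ\psi$), and then confirm that $\bar f$ really maps $S$ into itself so that its permutation hypothesis is meaningful. Once the relation is secured, the crucial structural input is that the fibre $f^{-1}(f(x))$ is a coset of the two-element set $\{0,\alpha\}$: the backward implication places this coset inside the fibre and the forward implication shows the fibre is no larger, which is precisely where the hypothesis $\operatorname{ker}(\phi)\cap\operatorname{ker}(\psi)=\{0,\alpha\}$ (rather than $\{0\}$, as in Lemma~\ref{AGW}) converts permutation behaviour into $2$-to-$1$ behaviour.
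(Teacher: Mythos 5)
Your proof is correct. The paper itself gives no proof of this lemma (it is quoted from \cite{MQ}), and your argument --- the commutative-diagram/fibration argument $\psi\circ f=\bar f\circ\psi$, using that linearized polynomials with coefficients in $\mathbb{F}_q$ commute and that $h$ takes values in $\mathbb{F}_q^*$, to show each nonempty fibre of $f$ is exactly a coset of $\{0,\alpha\}$ --- is precisely the standard AGW-type argument behind the cited result.
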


As immediate consequences,  some general framework of permutation polynomials and 2-to-1 polynomials have been provided in \cite{AGW} and \cite{MQ}, respectively. The $c$-differential uniformity of some of these polynomials has been investigated by Bartoli and Calderini in \cite{BC}. In what follows, we will study the  remaining known polynomials given in \cite{AGW}.


\begin{thm}Let $\phi$ be an $\mathbb{F}_{q}$-linear  polynomial over $\mathbb{F}_{q}$ seen as endomorphisms of $\mathbb{F}_{q^{n}}$ with $\phi(1)\not\equiv\,0\,({\rm mod}\,p)$ and $g(x) \in \mathbb{F}_{q^{n}}[x]$. Let $u \in \mathbb{F}_{q}^{*}$ and $c\in\mathbb{F}_{q}\backslash\{1\}$. Then $F(x)=u\phi(x)+g\left(\operatorname{Tr}^{q^n}_{q}(x)\right)^{q}-g\left(\operatorname{Tr}^{q^n}_{q}(x)\right)$ is a P$c$N polynomial of $\mathbb{F}_{q^{n}}$  if and only if $\operatorname{ker}(\phi) \cap \operatorname{ker}\left(\operatorname{Tr}^{q^n}_{q}\right)=\{0\}$.
\end{thm}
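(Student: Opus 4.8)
The plan is to apply the AGW criterion (Lemma~\ref{AGW}) directly. The key observation is that $F$ is a P$c$N function if and only if, for every $a \in \mathbb{F}_{q^n}$, the $c$-derivative ${}_cD_a F(x) = F(x+a) - cF(x)$ permutes $\mathbb{F}_{q^n}$. So the task reduces to writing ${}_cD_a F$ in the shape $h(\psi(x))\phi(x) + g(\psi(x))$ required by Lemma~\ref{AGW}, with $\psi = \operatorname{Tr}^{q^n}_q$, and then reading off conditions (1) and (2) of that lemma.

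First I would compute ${}_cD_a F(x)$ explicitly. Since $\phi$ is $\mathbb{F}_q$-linear, $u\phi(x+a) - cu\phi(x) = (1-c)u\phi(x) + u\phi(a)$. For the trace part, set $T = \operatorname{Tr}^{q^n}_q(x)$ and note $\operatorname{Tr}^{q^n}_q(x+a) = T + \operatorname{Tr}^{q^n}_q(a)$; writing $G(y) = g(y)^q - g(y)$, the remaining contribution is $G(T + \operatorname{Tr}^{q^n}_q(a)) - cG(T)$. Thus
\begin{equation}\label{eq:cder}
{}_cD_a F(x) = (1-c)u\,\phi(x) + G\!\left(\operatorname{Tr}^{q^n}_q(x) + \operatorname{Tr}^{q^n}_q(a)\right) - c\,G\!\left(\operatorname{Tr}^{q^n}_q(x)\right) + u\phi(a),
\end{equation}
which is precisely of the form $h(\psi(x))\phi(x) + g_a(\psi(x))$ with the constant $h \equiv (1-c)u$ and $\psi = \operatorname{Tr}^{q^n}_q$; the crucial point is that $h$ takes the nonzero value $(1-c)u \in \mathbb{F}_q^*$ (using $c \neq 1$ and $u \in \mathbb{F}_q^*$), so the hypothesis $h(\psi(\mathbb{F}_{q^n})) \subseteq \mathbb{F}_q^*$ of Lemma~\ref{AGW} holds.

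Next I would apply Lemma~\ref{AGW} to \eqref{eq:cder}. Condition~(1) is exactly $\ker(\phi) \cap \ker(\operatorname{Tr}^{q^n}_q) = \{0\}$, the stated hypothesis. For condition~(2), I must check that the induced map $h(y)\phi(y) + \psi(g_a(y)) = (1-c)u\,\phi(y) + \operatorname{Tr}^{q^n}_q\!\big(G(y+\operatorname{Tr}^{q^n}_q(a)) - cG(y) + u\phi(a)\big)$ permutes $\psi(\mathbb{F}_{q^n}) = \mathbb{F}_q$. Here is where the specific form $G = g^q - g$ pays off: for any $y \in \mathbb{F}_q$, $G(y) = g(y)^q - g(y)$ satisfies $\operatorname{Tr}^{q^n}_q(G(y)) = \operatorname{Tr}^{q^n}_q(g(y)^q) - \operatorname{Tr}^{q^n}_q(g(y)) = 0$, since $\operatorname{Tr}^{q^n}_q$ is invariant under raising to the $q$-th power; the same holds for the shifted argument. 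Hence all the $G$-terms vanish under $\operatorname{Tr}^{q^n}_q$, and condition~(2) collapses to requiring that $y \mapsto (1-c)u\,\phi(y) + (\text{constant})$ permutes $\mathbb{F}_q$. Since $\phi(y) = \phi(1)y$ for $y \in \mathbb{F}_q$ (as $\phi$ is $\mathbb{F}_q$-linear) and $(1-c)u\phi(1) \neq 0$ by $\phi(1) \not\equiv 0 \pmod p$, this affine map is automatically a permutation of $\mathbb{F}_q$. Therefore condition~(2) holds unconditionally, and the P$c$N property is equivalent to condition~(1) alone.

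The main obstacle I anticipate is bookkeeping in verifying that condition~(2) is vacuous, rather than any deep difficulty: one must correctly interpret $\phi$ restricted to the one-dimensional $\mathbb{F}_q$-subspace $\mathbb{F}_q \subseteq \psi(\mathbb{F}_{q^n})$ and confirm that $\phi(\mathbb{F}_q) \subseteq \mathbb{F}_q$ so the map really is an endomorphism of $\psi(\mathbb{F}_{q^n})$, and one must be careful that the constant shift $u\phi(a)$ and the shifted trace $\operatorname{Tr}^{q^n}_q(a)$ do not destroy surjectivity onto $\mathbb{F}_q$ — but affine maps preserve bijectivity, so they do not. The ``only if'' direction follows because, should condition~(1) fail, Lemma~\ref{AGW} guarantees ${}_cD_a F$ is not a permutation (taking $a = 0$, say, where \eqref{eq:cder} still satisfies the lemma's hypotheses), so $F$ cannot be P$c$N.
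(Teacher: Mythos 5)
Your proposal is correct and follows essentially the same route as the paper: write ${}_cD_aF$ in the AGW form with $\psi=\operatorname{Tr}^{q^n}_q$ and constant $h\equiv(1-c)u$, observe that $\operatorname{Tr}^{q^n}_q$ kills the $g(\cdot)^q-g(\cdot)$ terms (using $c\in\mathbb{F}_q$), so condition (2) reduces to the affine map $y\mapsto (1-c)u\phi(1)y+\text{const}$ permuting $\mathbb{F}_q$, leaving condition (1) as the sole criterion. Your write-up is in fact more explicit than the paper's about why condition (2) is vacuous.
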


\begin{proof}Denote by $\psi(x)=\operatorname{Tr}^{q^n}_{q}(x)$. Then $F(x)$ is P$c$N if and only if
$$F(x+a)-cF(x)=u(1-c) \phi(x)+g\left(\psi(x+a)\right)^q-g\left(\psi(x+a)\right)-cg\left(\psi(x)\right)^q+cg\left(\psi(x)\right)+u\phi(a)$$
is a permutation polynomial for any $a\in\mathbb{F}_{q^n}$. Note that $\phi(x)$ permutes $\mathbb{F}_{q}$ due to $\phi(1)\not\equiv\,0\,({\rm mod}\,p)$. Therefore, by Lemma \ref{AGW} and the fact that $$\psi\left(g\left(x+\psi(a)\right)^q-g\left(x+\psi(a)\right)-cg\left(x\right)^q+cg\left(x\right)\right)=0$$ for $c\in\mathbb{F}_{q}\backslash\{1\}$, we have  $F(x+a)-cF(x)$ is a permutation polynomial if and only if $\operatorname{ker}(\phi) \cap \operatorname{ker}\left(\operatorname{Tr}^{q^n}_{q}\right)=\{0\}$.  This completes the proof.
\end{proof}

\begin{example} Choose $\phi(x)=x$. Obviously, $\phi(x)$ be an $\mathbb{F}_{q}$-linear permutation polynomial over $\mathbb{F}_{q},$ and $\operatorname{ker}(\phi) \cap \operatorname{ker}\left(\operatorname{Tr}^{q^n}_{q}\right)=\{0\}$. Magma shows that for any $u \in \mathbb{F}_{q}^{*}$, $c\in\mathbb{F}_{q}\backslash\{1\}$ and $g(x) \in \mathbb{F}_{q^{n}}[x]$, the polynomial $F(x)=ux+g\left(\operatorname{Tr}^{q^n}_{q}(x)\right)^{q}-g\left(\operatorname{Tr}^{q^n}_{q}(x)\right)$ is  P$c$N.
\end{example}

\begin{thm}\label{x^q-x} Let $g(x)\in \mathbb{F}_{q^{n}}[x]$ restricted to $\mathbb{F}_{q}$ induce a permutation of $\mathbb{F}_{q}$ and $u \in \mathbb{F}_{q}^{*}$. Then the polynomial $F(x)=u\left(x^{q}-x\right)+g\left(\operatorname{Tr}^{q^n}_{q}(x)\right)$ is P$c$N for any $c\in\mathbb{F}_{q}\backslash\{1\}$ if and only if $p\nmid n$.
\end{thm}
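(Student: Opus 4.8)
The plan is to fix $c\in\mathbb{F}_{q}\backslash\{1\}$ and an arbitrary $a\in\mathbb{F}_{q^{n}}$, and to apply the AGW criterion (Lemma~\ref{AGW}) to the $c$-derivative ${}_cD_aF(x)=F(x+a)-cF(x)$, since $F$ is P$c$N precisely when ${}_cD_aF$ permutes $\mathbb{F}_{q^{n}}$ for every $a$. Writing $\phi(x)=x^{q}-x$ and $\psi(x)=\operatorname{Tr}^{q^n}_{q}(x)$, both $\mathbb{F}_{q}$-linear, a direct expansion using $(x+a)^{q}-(x+a)=(x^{q}-x)+(a^{q}-a)$ and the additivity of $\psi$ gives ${}_cD_aF(x)=u(1-c)\phi(x)+G_a(\psi(x))$, where $G_a(y)=g(y+s)-cg(y)+u(a^{q}-a)$ and $s=\psi(a)\in\mathbb{F}_{q}$. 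Thus ${}_cD_aF$ has exactly the shape $h(\psi(x))\phi(x)+G_a(\psi(x))$ required by Lemma~\ref{AGW}, with the constant multiplier $h\equiv u(1-c)\in\mathbb{F}_{q}^{*}$ (here $u\neq0$ and $c\neq1$), so $h(\psi(\mathbb{F}_{q^{n}}))\subseteq\mathbb{F}_{q}^{*}$ as needed.

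Next I would dispose of condition~(1) of Lemma~\ref{AGW}, namely $\ker(\phi)\cap\ker(\psi)=\{0\}$, which carries the whole dependence on $n$. Since $\ker(\phi)=\{x:x^{q}=x\}=\mathbb{F}_{q}$ and, for $y\in\mathbb{F}_{q}$, one has $\psi(y)=\operatorname{Tr}^{q^n}_{q}(y)=ny$, the intersection equals $\{y\in\mathbb{F}_{q}:ny=0\}$. This is $\{0\}$ exactly when $n\not\equiv0\pmod p$, i.e. $p\nmid n$, and it is all of $\mathbb{F}_{q}$ when $p\mid n$. This already yields the ``only if'' direction: taking $a=0$ gives ${}_cD_0F=(1-c)F$, so a P$c$N function $F$ is in particular a permutation of $\mathbb{F}_{q^{n}}$; since condition~(1) of Lemma~\ref{AGW} is necessary for $F$ to permute, we must have $p\nmid n$.

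For the ``if'' direction I would then verify condition~(2): that $h(x)\phi(x)+\psi(G_a(x))$ permutes $\psi(\mathbb{F}_{q^{n}})=\mathbb{F}_{q}$. The decisive simplification is that $\phi$ \emph{vanishes} on $\mathbb{F}_{q}$ (for $x\in\mathbb{F}_{q}$, $x^{q}-x=0$), so the $\phi$-term disappears and only $\psi(G_a)$ survives. Evaluating for $x\in\mathbb{F}_{q}$ and using $\psi(a^{q}-a)=\operatorname{Tr}^{q^n}_{q}(a^{q})-\operatorname{Tr}^{q^n}_{q}(a)=0$ together with $\psi(y)=ny$ for $y\in\mathbb{F}_{q}$ (note $g(x+s),g(x)\in\mathbb{F}_{q}$ because $g$ maps $\mathbb{F}_{q}$ to itself), one obtains $\psi(G_a(x))=n\bigl(g(x+s)-cg(x)\bigr)$. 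As $p\nmid n$ makes $n\in\mathbb{F}_{q}^{*}$, condition~(2) is equivalent to the assertion that $x\mapsto g(x+s)-cg(x)$ is a bijection of $\mathbb{F}_{q}$ for every $s=\operatorname{Tr}^{q^n}_{q}(a)$, and as $a$ ranges over $\mathbb{F}_{q^{n}}$ the shift $s$ ranges over all of $\mathbb{F}_{q}$.

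The hard part is precisely this last bijectivity on $\mathbb{F}_{q}$. For $c=0$ it is immediate, being nothing but the permutation property of $g$ restricted to $\mathbb{F}_{q}$; but for general $c\in\mathbb{F}_{q}\backslash\{1\}$ it requires the restricted $c$-derivative $x\mapsto g(x+s)-cg(x)$ to be a bijection for \emph{every} shift $s$, which is a genuinely stronger demand than $g|_{\mathbb{F}_{q}}$ being a permutation. I expect this to be the main obstacle, and it is where the hypotheses on $g$ must be used in full (or suitably strengthened); once it is established, Lemma~\ref{AGW} combines conditions~(1) and~(2) to conclude that every ${}_cD_aF$ permutes $\mathbb{F}_{q^{n}}$, i.e. $F$ is P$c$N, completing the equivalence with $p\nmid n$.
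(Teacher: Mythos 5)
Your reduction is exactly the one the paper uses: apply Lemma~\ref{AGW} to ${}_cD_aF(x)=u(1-c)(x^q-x)+G_a(\operatorname{Tr}^{q^n}_{q}(x))$ with $G_a(y)=g(y+s)-cg(y)+u(a^q-a)$, $s=\operatorname{Tr}^{q^n}_{q}(a)$, and observe that condition~(1) is $\ker(x^q-x)\cap\ker(\operatorname{Tr}^{q^n}_{q})=\{y\in\mathbb{F}_q: ny=0\}=\{0\}$, i.e.\ $p\nmid n$. Your derivation of condition~(2), namely that $y\mapsto n\bigl(g(y+s)-cg(y)\bigr)$ must permute $\mathbb{F}_q$ for every $s\in\mathbb{F}_q$, is also correct, and the ``hard part'' you single out is genuinely unresolved in your write-up: you do not prove it, and in fact it cannot be proved from the stated hypotheses.

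What you have located is a gap in the paper itself. At the corresponding point the paper computes $\operatorname{Tr}^{q^n}_{q}\bigl(g(x)+\operatorname{Tr}^{q^n}_{q}(a)-cg(x)\bigr)=n(1-c)g(x)+n\operatorname{Tr}^{q^n}_{q}(a)$, i.e.\ it silently replaces the shifted term $g\bigl(x+\operatorname{Tr}^{q^n}_{q}(a)\bigr)$ by $g(x)+\operatorname{Tr}^{q^n}_{q}(a)$, which turns condition~(2) into the trivially true statement that $(1-c)g$ plus a constant permutes $\mathbb{F}_q$. With the correct term, the condition is the one you wrote down, and it can fail even when $g|_{\mathbb{F}_q}$ is a permutation: take $q=5$, $n=2$, $u=1$, $g(x)=x^3$ (a permutation of $\mathbb{F}_5$), $c=2$ and $s=1$; then $g(1)-2g(0)=1=g(2)-2g(1)$, so $y\mapsto g(y+1)-2g(y)$ is not injective on $\mathbb{F}_5$, condition~(2) fails for any $a$ with $\operatorname{Tr}^{25}_{5}(a)=1$, and since Lemma~\ref{AGW} is an equivalence, ${}_2D_aF$ is not a permutation of $\mathbb{F}_{25}$ although $p\nmid n$. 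So the ``if'' direction of the theorem is false as stated; it holds for $c=0$ (where the map is $g(y+s)$), or under a stronger hypothesis on $g$ such as additivity or affineness on $\mathbb{F}_q$, which is what your final paragraph correctly anticipates. Your proposal is therefore incomplete as a proof, but for the right reason: the missing step is not fillable, and the paper's own argument only appears to fill it via the miscomputation above.
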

\begin{proof}For any $c\in\mathbb{F}_{q}\backslash\{1\}$, one has
$$F(x+a)-cF(x)=u(1-c)(x^q-x)+g\left(\operatorname{Tr}^{q^n}_{q}(x+a)\right)-cg\left(\operatorname{Tr}^{q^n}_{q}(x)\right)+u(a^q-a).$$ $F(x)$ is a P$c$N polynomial if and only if $F(x+a)-cF(x)$ permutes $\mathbb{F}_{q^n}$ for any $a\in\mathbb{F}_{q^n}$. Note that
$$u(1-c) \left(x^{q}-x\right)+\operatorname{Tr}^{q^n}_{q}\left(g(x)+\operatorname{Tr}^{q^n}_{q}(a)-cg(x)\right)=
n(1-c)g(x)+n\operatorname{Tr}^{q^n}_{q}(a)$$ when $x\in \mathbb{F}_{q}$ since  $g(x)$ induces a permutation of $\mathbb{F}_{q}$. Then the  result is  a consequence of Lemma \ref{AGW}.
\end{proof}

\begin{example} Let $p$ be a prime with $p\nmid n$  and $g(x)=x^p$. Then $F(x)$ defined in Theorem \ref{x^q-x} is a P$c$N polynomial for any $c\in\mathbb{F}_{q}\backslash\{1\}$.
\end{example}

Next, according to the AGW and the generalized AGW criteria, we propose another two  classes of permutation polynomials or 2-to-1 polynomials and we prove that  they are P$c$N or AP$c$N for any $c\in\mathbb{F}_q\backslash\{1\}$. Note that a P$c$N polynomial or an AP$c$N polynomial for $c=0$ is actually a permutation
polynomial or 2-to-1 polynomial. Therefore, we state our result as follows.

\begin{thm}
Let $u \in \mathbb{F}_{q}^{*}$, $c\in\mathbb{F}_{q}\backslash\{1\}$ and $n$ be a positive integer with $p|n$. Let $\phi(x)$ be an $\mathbb{F}_{q}$-linear  polynomial over $\mathbb{F}_{q}$ seen as endomorphisms an of $\mathbb{F}_{q^n}$ with $\phi(1)\not\equiv\,0\,({\rm mod}\,p)$ and $g(x)$ be any polynomial in $\mathbb{F}_{q^{n}}[x]$ such that $g\left(\mathbb{F}_{q}\right) \subseteq \mathbb{F}_{q}.$ Then $F(x)=u\phi(x)+g\left(\operatorname{Tr}^{q^n}_{q}(x)\right)$ is a P$c$N polynomial if and only if $\operatorname{ker}(\phi) \cap \operatorname{ker}\left(\operatorname{Tr}^{q^n}_{q}\right)=\{0\}$.
\end{thm}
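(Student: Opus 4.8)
The plan is to characterize the P$c$N property through the permutation behaviour of the $c$-derivative and then feed the resulting polynomial into the AGW criterion of Lemma \ref{AGW}, exactly as in the proofs of the two preceding theorems. Recall that $F$ is P$c$N over $\mathbb{F}_{q^n}$ if and only if ${}_cD_aF(x)=F(x+a)-cF(x)$ permutes $\mathbb{F}_{q^n}$ for every $a\in\mathbb{F}_{q^n}$. Writing $\psi(x)=\Tr^{q^n}_q(x)$ and using that both $\phi$ and $\psi$ are additive, I would expand
$$
{}_cD_aF(x)=u(1-c)\phi(x)+\big(g(\psi(x)+\psi(a))-cg(\psi(x))\big)+u\phi(a).
$$
This has precisely the shape $h(\psi(x))\phi(x)+\tilde g(\psi(x))$ required by Lemma \ref{AGW}, with $h\equiv u(1-c)$ constant and $\tilde g(y)=g(y+\psi(a))-cg(y)+u\phi(a)\in\mathbb{F}_{q^n}[y]$. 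Since $u\in\mathbb{F}_q^*$ and $c\ne1$, we have $u(1-c)\in\mathbb{F}_q^*$, so the hypothesis $h(\psi(\mathbb{F}_{q^n}))\subseteq\mathbb{F}_q^*$ of Lemma \ref{AGW} is satisfied.

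By Lemma \ref{AGW}, ${}_cD_aF$ permutes $\mathbb{F}_{q^n}$ if and only if (1) $\ker(\phi)\cap\ker(\psi)=\{0\}$ and (2) the map $h(x)\phi(x)+\psi(\tilde g(x))$ permutes $\psi(\mathbb{F}_{q^n})=\mathbb{F}_q$. Condition (1) is exactly the stated kernel condition and does not depend on $a$, so the whole equivalence follows once I establish that condition (2) holds automatically for every $a$; this automatic verification is the core of the argument.

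To verify (2), I evaluate the map on $x\in\mathbb{F}_q$ using two restriction facts. First, because $\phi$ is a $q$-polynomial with coefficients in $\mathbb{F}_q$, its restriction to $\mathbb{F}_q$ is multiplication by $\phi(1)$, i.e. $\phi(x)=\phi(1)x$ for $x\in\mathbb{F}_q$. Second, $\Tr^{q^n}_q(z)=nz$ for every $z\in\mathbb{F}_q$. Since $g(\mathbb{F}_q)\subseteq\mathbb{F}_q$ and $\psi(a),c\in\mathbb{F}_q$, the quantity $g(x+\psi(a))-cg(x)$ lies in $\mathbb{F}_q$ for $x\in\mathbb{F}_q$, so $\psi(\tilde g(x))=n\big(g(x+\psi(a))-cg(x)\big)+\psi(u\phi(a))$. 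The decisive point is the hypothesis $p\mid n$: in characteristic $p$ this forces $n=0$ in $\mathbb{F}_q$, so the entire $g$-contribution vanishes and only the constant $\psi(u\phi(a))$ survives. Hence the condition (2) map reduces on $\mathbb{F}_q$ to the affine map $x\mapsto u(1-c)\phi(1)\,x+\mathrm{const}$.

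Finally I would note that its leading coefficient $u(1-c)\phi(1)$ is nonzero, since $u\in\mathbb{F}_q^*$, $1-c\ne0$, and $\phi(1)\not\equiv0\pmod p$; therefore this map permutes $\mathbb{F}_q$ and condition (2) holds for every $a$. Combining with Lemma \ref{AGW}, ${}_cD_aF$ permutes $\mathbb{F}_{q^n}$ for all $a$ exactly when $\ker(\phi)\cap\ker(\psi)=\{0\}$, which is the claimed equivalence. The main obstacle, and the reason the hypotheses are shaped as they are, is the collapse of condition (2) to a bijective affine map: it is precisely here that $g(\mathbb{F}_q)\subseteq\mathbb{F}_q$, $\phi(1)\not\equiv0\pmod p$, and especially $p\mid n$ act together, the last of which annihilates the otherwise troublesome arbitrary polynomial $g$.
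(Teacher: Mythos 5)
Your proposal is correct and follows essentially the same route as the paper: write ${}_cD_aF$ in the AGW form with constant $h\equiv u(1-c)$, reduce the permutation condition via Lemma \ref{AGW} to the kernel condition plus a map on $\psi(\mathbb{F}_{q^n})=\mathbb{F}_q$, and observe that $p\mid n$ together with $g(\mathbb{F}_q)\subseteq\mathbb{F}_q$ annihilates the $g$-contribution, leaving the bijective affine map $x\mapsto u(1-c)\phi(1)x+\mathrm{const}$. Your write-up is in fact slightly more explicit than the paper's about why $\phi$ restricts to multiplication by $\phi(1)$ on $\mathbb{F}_q$, but the argument is the same.
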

\begin{proof}Let $c\in\mathbb{F}_{q}\backslash\{1\}$. Then $f(x)$ is P$c$N if and only if
$$F(x+a)-cF(x)=u(1-c) \phi(x)+g\left(\operatorname{Tr}^{q^n}_{q}(x+a)\right)-cg\left(\operatorname{Tr}^{q^n}_{q}(x)\right)+u\phi(a)$$
is a permutation polynomial for any $a\in\mathbb{F}_{q^n}$. From Lemma \ref{AGW}, we can see that $F(x+a)-cF(x)$ is a permutation polynomial if and only if
$$u(1-c) \phi(x)+\operatorname{Tr}_{q}^{q^{n}}\left(g\left(x+\operatorname{Tr}_{q}^{q^{n}}(a)\right)-cg(x)\right)$$ permutes $\mathbb{F}_{q}$.
Note that $\phi(x)$ permutes $\mathbb{F}_{q}$ due to $\phi(1)\not\equiv0\pmod{p}$ and $\operatorname{Tr}^{q^n}_{q}(g(x))=g(x)\operatorname{Tr}^{q^n}_{q}(1)=0$ for any $x\in\mathbb{F}_{q}$ due to $g\left(\mathbb{F}_{q}\right) \subseteq \mathbb{F}_{q}$ and $p|n$. Therefore, the result follows due to the fact that $\operatorname{Tr}_{q}^{q^{n}}\left(g\left(x+\operatorname{Tr}_{q}^{q^{n}}(a)\right)-cg(x)\right)=
g\left(x+\operatorname{Tr}_{q}^{q^{n}}(a)\right)\operatorname{Tr}_{q}^{q^{n}}(1)+cg(x)\operatorname{Tr}_{q}^{q^{n}}(1)=0$. This completes the proof.
\end{proof}

\begin{example} Let $p$ be an any prime, $m$, $n$ be two positive integers with $p|n$ and $q=p^m$. Taking $\phi(x)=x$, one can easily check that $\phi(x)$ is an $\mathbb{F}_{q}$-linear permutation polynomial over $\mathbb{F}_{q}$ and $\operatorname{ker}(\phi) \cap \operatorname{ker}\left(\operatorname{Tr}^{q^n}_{q}\right)=\{0\}$. If $g(x)\in \mathbb{F}_{q^{n}}[x]$ satisfying $g\left(\mathbb{F}_{q}\right) \subseteq \mathbb{F}_{q}$, then $F(x)=x+g\left(\operatorname{Tr}^{q^n}_{q}(x)\right)$ is a P$c$N polynomial for any $c\in\mathbb{F}_{q}\backslash\{1\}$.
\end{example}

\begin{thm}\label{phix-di} Let $u \in \mathbb{F}_{q}^{*}$, $c\in\mathbb{F}_{q}\backslash\{1\}$,$J=\left\{x^{q}-x: x \in \mathbb{F}_{q^{n}}\right\}$ and $d$ be a positive divisor of $q-1$.  Let $\phi(x)$ be an $\mathbb{F}_{q}$-linear polynomial over $\mathbb{F}_{q}$ seen as an endomorphisms of $\mathbb{F}_{q^{n}}$ and $g(x) \in \mathbb{F}_{q^{n}}[x]$. Then
$
F(x)=u\phi(x)+g\left(x^{q}-x\right)^{\left(q^{n}-1\right) /d}
$
is a P$c$N polynomial of $\mathbb{F}_{q^{n}}$ if and only if $\operatorname{ker}(\phi) \cap \mathbb{F}_{q}=\{0\}$ and $\phi(x)$ permutes $J$. Furthermore, if $q=2^m$, $\phi$ is 2-to-1 over $\mathbb{F}_{q}$ and $\phi(x)$ permutes $J$, then $F(x)$ is  AP$c$N.
\end{thm}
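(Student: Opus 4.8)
The plan is to realize $F$ as an instance of the AGW construction and to apply Lemma \ref{AGW} (respectively Lemma \ref{AGW-like}) to the $c$-derivative ${}_cD_aF$ for each $a$. Write $\psi(x)=x^q-x$, so that $\psi$ is $\mathbb{F}_q$-linear with $\ker\psi=\mathbb{F}_q$ and image $J=\psi(\mathbb{F}_{q^n})$, which is an $\mathbb{F}_q$-subspace. Since $\phi$ is a $q$-polynomial with coefficients in $\mathbb{F}_q$, it commutes with the Frobenius $x\mapsto x^q$, hence $\phi\circ\psi=\psi\circ\phi$; this fact will be used below. First I would compute, using additivity of $\phi$ and the identity $(x+a)^q-(x+a)=\psi(x)+\psi(a)$,
\[
F(x+a)-cF(x)=u(1-c)\phi(x)+G_a(\psi(x)),
\]
where $G_a(y)=g(y+\psi(a))^{(q^n-1)/d}-c\,g(y)^{(q^n-1)/d}+u\phi(a)$. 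This is exactly the AGW shape $h(\psi(x))\phi(x)+G_a(\psi(x))$ with the constant $h\equiv u(1-c)\in\mathbb{F}_q^*$, nonzero because $u\neq0$ and $c\neq1$.

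The key observation is that, since $d\mid q-1$, the $\frac{q^n-1}{d}$-th power map sends every element of $\mathbb{F}_{q^n}$ into $\{0\}\cup\mu_d$, where $\mu_d\subseteq\mathbb{F}_q^*$ is the group of $d$-th roots of unity. Together with $c,u\in\mathbb{F}_q$ this gives $g(\cdot)^{(q^n-1)/d}\in\mathbb{F}_q=\ker\psi$, so $\psi$ annihilates the first two terms of $G_a$ and
\[
\psi(G_a(x))=\psi(u\phi(a))=u\phi(\psi(a)),
\]
a constant that lies in $J$ as soon as $\phi$ stabilizes $J$. Hence the AGW companion map is the affine twist $\bar f_a(x)=u(1-c)\phi(x)+u\phi(\psi(a))$ on $J$. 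Because $u(1-c)\in\mathbb{F}_q^*$ and $J$ is an $\mathbb{F}_q$-subspace, $\bar f_a$ permutes $J$ if and only if $\phi$ permutes $J$, uniformly in $a$ (the additive constant is harmless, being itself an element of $J$). Since the AGW kernel condition $\ker\phi\cap\ker\psi=\{0\}$ reads precisely $\ker\phi\cap\mathbb{F}_q=\{0\}$, applying Lemma \ref{AGW} to ${}_cD_aF$ for every $a$ shows that all these maps permute $\mathbb{F}_{q^n}$ iff both stated conditions hold; specializing to $a=0$ (where $\bar f_0(x)=u(1-c)\phi(x)$) gives the necessity. This settles the P$c$N equivalence.

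For the AP$c$N part I would repeat the same reduction but invoke the generalized criterion Lemma \ref{AGW-like}, which is available since $q=2^m$. The hypothesis that $\phi$ is $2$-to-$1$ on $\mathbb{F}_q$ translates to $\ker\phi\cap\ker\psi=\{0,\alpha\}$ for some $\alpha\in\mathbb{F}_q^*$, exactly the kernel condition of Lemma \ref{AGW-like}, while the companion map $\bar f_a$ still permutes $J$ because $\phi$ permutes $J$. Lemma \ref{AGW-like} then makes each ${}_cD_aF$ a $2$-to-$1$ map of $\mathbb{F}_{q^n}$, so every equation ${}_cD_aF(x)=b$ has $0$ or $2$ solutions and ${}_c\Delta_F=2$, i.e. $F$ is AP$c$N.

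The only genuinely delicate step is the evaluation $\psi(G_a(x))=u\phi(\psi(a))$: it rests on the two facts that the power-map outputs land in $\ker\psi=\mathbb{F}_q$ (requiring $d\mid q-1$) and that $\phi$ commutes with $\psi$ (requiring $\phi$ to have $\mathbb{F}_q$-coefficients). Once the companion map is recognized as a mere affine twist of $\phi$ on $J$, the remainder is a direct invocation of the two AGW-type lemmas, and I expect no further obstacle.
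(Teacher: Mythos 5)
Your proof is correct and takes essentially the same route as the paper: both of you cast ${}_cD_aF$ in the AGW shape with $\psi(x)=x^q-x$, note that the $(q^n-1)/d$-th powers land in $\mathbb{F}_q=\ker\psi$ (since $d\mid q-1$) so that $\psi$ annihilates them, and conclude via Lemma \ref{AGW} (resp.\ Lemma \ref{AGW-like}) once the companion map on $J$ is seen to be an affine twist of $\phi$. If anything you are more explicit than the paper, which omits the constant $u\phi(\psi(a))\in J$ from the companion map and dispatches the AP$c$N half with ``can be similarly obtained.''
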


\begin{proof} We only need to prove the first part of our theorem as the latter can be similarly obtained. Let $\psi(x)=x^q-x$. Again by Lemma \ref{AGW}, we have that
$$F(x+a)-cF(x)=u(1-c)\phi(x)+g\left(x^{q}-x+a^q-a\right)^{\left(q^{n}-1\right) /d}-cg\left(x^{q}-x\right)^{\left(q^{n}-1\right) /d}+u\phi(a)$$
is a permutation polynomial if and only if $\operatorname{ker}(\phi) \cap \mathbb{F}_{q}=\{0\}$ and
$$u(1-c)\phi(x)+\psi(g\left(x+a^q-a\right)^{\left(q^{n}-1\right) /d}-cg\left(x\right)^{\left(q^{n}-1\right) /d})$$
 permutes $J$.  This completes the proof since $\psi(g\left(x+a^q-a\right)^{\left(q^{n}-1\right) /d}-cg\left(x\right)^{\left(q^{n}-1\right) /d})$ is identically $0$.
\end{proof}

With a similar proof, the above  theorem  can be generalized as follows:
\begin{cor}\label{AGW-cor} Let $u \in \mathbb{F}_{q}^{*}$, $c\in\mathbb{F}_{q}\backslash\{1\}$, $J=\left\{x^{q}-x: x \in \mathbb{F}_{q^{n}}\right\}$.  Let $\phi(x)$ be a $\mathbb{F}_{q}$-linear polynomial over $\mathbb{F}_{q}$ seen as endomorphisms of $\mathbb{F}_{q^{n}}$ and $g(x) \in \mathbb{F}_{q^{n}}[x]$. Let $d_1,d_2,\cdots,d_t\in\{1\leq d\leq q-1:d\,|\,q-1\}$. Then the polynomial
$$
F(x)=u\phi(x)+\sum_{i=1}^{t}g\left(x^{q}-x\right)^{\left(q^{n}-1\right) /d_i}
$$
is a P$c$N polynomial of $\mathbb{F}_{q^{n}}$ if and only if $\operatorname{ker}(\phi) \cap \mathbb{F}_{q}=\{0\}$ and $\phi(x)$ permutes $J$. Furthermore, if $q=2^m$, $\phi$ is 2-to-1 over $\mathbb{F}_{q}$ and $\phi(x)$ permutes $J$, then $F(x)$ is  AP$c$N.
\end{cor}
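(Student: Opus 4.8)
The plan is to mirror the proof of Theorem \ref{phix-di} essentially verbatim, reducing the P$c$N property to the AGW criterion (Lemma \ref{AGW}) and then checking that the extra summands change nothing essential. I write $\psi(x)=x^q-x$, so $\psi$ is $\mathbb{F}_q$-linear with $\ker(\psi)=\mathbb{F}_q$ and image $\psi(\mathbb{F}_{q^n})=J$. By Definition \ref{def}, $F$ is P$c$N exactly when, for every $a\in\mathbb{F}_{q^n}$, the map $F(x+a)-cF(x)$ permutes $\mathbb{F}_{q^n}$. Expanding and discarding the additive constant $u\phi(a)$, this map equals $u(1-c)\phi(x)+G_a(\psi(x))$, where $G_a(y)=\sum_{i=1}^{t}\big(g(y+\psi(a))^{(q^n-1)/d_i}-c\,g(y)^{(q^n-1)/d_i}\big)$. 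Since $u(1-c)\in\mathbb{F}_q^*$ (because $u\neq 0$ and $c\neq 1$), this is precisely of the AGW shape $h(\psi(x))\phi(x)+G_a(\psi(x))$ with constant multiplier $h\equiv u(1-c)$, so Lemma \ref{AGW} applies.

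The only step that uses the hypothesis $d_i\mid q-1$ is the verification that $\psi\circ G_a$ vanishes identically. For each $z\in\mathbb{F}_{q^n}$ and each $i$, the element $z^{(q^n-1)/d_i}$ is either $0$ or a $d_i$-th root of unity; as $d_i\mid q-1$, the $d_i$-th roots of unity form the unique subgroup of order $d_i$ of the cyclic group $\mathbb{F}_{q^n}^*$, which lies inside $\mathbb{F}_q^*$. Hence every summand of $G_a$ lies in $\mathbb{F}_q=\ker(\psi)$, and since $c\in\mathbb{F}_q$ as well, $G_a(y)\in\mathbb{F}_q$ and $\psi(G_a(y))=0$ for all $y$. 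Consequently AGW condition (2) collapses to the requirement that $u(1-c)\phi(x)$ permute $J$; because $J$ is an $\mathbb{F}_q$-subspace (the image of the $\mathbb{F}_q$-linear map $\psi$) and $u(1-c)\in\mathbb{F}_q^*$, scaling by $u(1-c)$ is a bijection of $J$, so this is equivalent to $\phi$ permuting $J$. AGW condition (1) reads $\ker(\phi)\cap\ker(\psi)=\ker(\phi)\cap\mathbb{F}_q=\{0\}$, and combining the two yields the stated equivalence.

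For the AP$c$N assertion over $q=2^m$ I would invoke the generalized AGW criterion (Lemma \ref{AGW-like}) in the identical way. The hypothesis that $\phi$ is $2$-to-$1$ on $\mathbb{F}_q$ translates to $\ker(\phi)\cap\ker(\psi)=\ker(\phi)\cap\mathbb{F}_q=\{0,\alpha\}$ for some $\alpha\in\mathbb{F}_{q^n}^*$, which is admissible precisely because $q$ is even. The auxiliary map $\bar f(x)=u(1-c)\phi(x)+\psi(G_a(x))=u(1-c)\phi(x)$ again permutes $J$ since $\phi$ does. Lemma \ref{AGW-like} then gives that $x\mapsto F(x+a)-cF(x)$ is $2$-to-$1$ for every $a$, so each equation ${}_cD_aF(x)=b$ has $0$ or $2$ solutions; hence ${}_c\Delta_F=2$ and $F$ is AP$c$N.

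I expect no genuine obstacle here: the whole argument is the single-exponent proof of Theorem \ref{phix-di}, and the lone new ingredient is the root-of-unity membership, which must be confirmed uniformly for all the exponents $(q^n-1)/d_i$ at once and then summed. The two points that deserve care are (i) that $G_a$ still lands in $\mathbb{F}_q$ after forming an $\mathbb{F}_q$-linear combination of several such powers (this is where $c\in\mathbb{F}_q$ and $d_i\mid q-1$ are both needed), and (ii) the stability of the subspace $J$ under multiplication by the scalar $u(1-c)$; both are immediate once the membership $z^{(q^n-1)/d_i}\in\mathbb{F}_q$ is established.
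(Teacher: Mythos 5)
Your proposal is correct and follows essentially the same route as the paper, which proves the corollary by repeating the argument of Theorem \ref{phix-di}: apply the AGW criterion with $\psi(x)=x^{q}-x$ and observe that $\psi$ kills the sum of power terms, so condition (2) collapses to $\phi$ permuting $J$. The only detail you add beyond the paper's (terse) treatment is the explicit root-of-unity justification that each $g(y)^{(q^{n}-1)/d_i}$ lies in $\mathbb{F}_q$, which is exactly the right reason the cancellation survives the passage from one exponent to several.
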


\begin{remark} Note that the result in \cite[Theorem 3.5]{BC} is an immediate consequence of  Corollary \ref{AGW-cor} when $d=q-1$ and $t=1$.  Therefore, our result covers the second part of the result in \cite[Theorem 3.5]{BC}.
\end{remark}

\begin{table}[!htb]
\footnotesize
\caption{The Known P$c$N and AP$c$N functions over finite fields}\label{tab1}
\begin{center}
\newcommand{\tabincell}[2]{\begin{tabular}{@{}#1@{}}#2\end{tabular}}
\renewcommand\arraystretch{1.1}
\begin{tabular}{|c|c|c|c|}
%
%
 \hline $p$  &$F(x)$                                                       &${}_c\Delta_F$      &Ref.\\  \hline\hline
        any  &$x^2$                                                        & $2$             &\cite{EFRST}\\ \hline

        any  &$x^{q-2}$                      &$2$              &\cite{EFRST}\\ \hline

        odd  &$x^{\frac{p^k+1}{2}}$                           &$1$               & \cite{EFRST},\cite{HPRS},\cite{YMZ}\\  \hline

%
        odd  &$x^{p^2-p+1}$                                          &$1$               & \cite{BT}\\  \hline

        odd  &$x^{p^k+1}$                 &$2$               & \cite{YMZ}\\  \hline

%
        $3$  &$x^{\frac{3^k+1}{2}}$                 &$2$               & \cite{YMZ}\\  \hline
        %
%
       odd  &$x^{dp^j}$  &1   &\cite{HPRS} \\\hline

         2 &$x^{2^{k}}+\alpha(1+c) \operatorname{Tr}^q_1(x)$  &1 &\cite{HPRS} \\\hline

       any  &$x^{p^{k}+1}+\gamma \operatorname{Tr}^q_1(x)$&1   &\cite{HPRS} \\\hline

        3  &\multirow{2}{*}{$x^{\frac{k(q-1)+2}{p^s+1}}$}             &\multirow{2}{*}{$1$}               &\multirow{2}{*}{\cite{ZH}}\\  \cline{1-1}
         5 &   &               &\\  \hline

%

        any  &\multirow{2}{*}{$b\phi(x)+\Tr^{q^n}_q(g(x^q-x))$, $b\phi(x)+g(x^q-x)^{\frac{q^n-1}{q-1}}$}     &$1$
        &\multirow{2}{*}{\cite{BC}} \\  \cline{1-1}\cline{3-3}

        2  &    &$2$               & \\  \hline

         any  &$b\phi(x)+g(x^q-x)^{s}$                                           &$1$               &\cite{BC}\\  \hline

%
%

        any    & $x\big(\sum_{i=1}^{l-1} x^{\frac{q-1}{l} i}+u\big)$  &$\leq2$ & Theorem 1 in this paper\\ \hline

        3    & $(x^{p^k}-x)^{\frac{q-1}{2}+p^{ik}}+a_1x+a_2x^{p^{k}}+a_3x^{p^{2k}}$&$\leq2$ & Theorem 2 in this paper\\ \hline

         2&   $x+\gamma\Tr^{q^2}_1(x^{2^k+1})$ &1  & Corollary 1 in this paper\\ \hline

        any      &$L(x)+L(\gamma)\Tr^{q^n}_q(x)^{q-1}$    &   $1$ & Theorem 4 in this paper\\ \hline


         2 &  $x^{p^k+1}+\gamma\Tr^{q}_1(x)$    &   $\leq2$ &Theorem 5 in this paper\\ \hline

         any & $x^{q+1}+a_{0} x^{q}+a_{1} x$     &   $2$ &Theorem 6 in this paper\\ \hline


   any   &$u\phi(x)+g\big(\operatorname{Tr}^{q^n}_{q}(x)\big)^{q}-g\big(\operatorname{Tr}^{q^n}_{q}(x)\big)$ & $1$& Theorem 8 in this paper\\  \hline

        any    & $u\left(x^{q}-x\right)+g\big(\operatorname{Tr}^{q^n}_{q}(x)\big)$ &$1$ & Theorem 9 in this paper\\ \hline
%

        any    &$u\phi(x)+g\big(\operatorname{Tr}^{q^n}_{q}(x)\big)$  &$1$ & Theorem 10 in this paper\\\hline

        any  & \multirow{2}{*}{$u\phi(x)+\sum_{i=1}^{t}g\left(x^{q}-x\right)^{\left(q^{n}-1\right) /d_i}$} &$1$ & \multirow{2}{*}{Corollary 2 in this paper}\\ \cline{1-1}\cline{3-3}

         2  &  &$2$ &\\
        \hline
\end{tabular}
\end{center}
\end{table}

\begin{example} An easy example of  $\phi(x)$ such that $\phi(x)$ is an $\mathbb{F}_{q}$-linear polynomial over $\mathbb{F}_{q}$ and $\phi(x)$ permutes $J$ is given by $\phi(x)=x$. Let $p$ be an odd prime and $g \in \mathbb{F}_{q^{n}}[x]$. Then $f(x)=x+\left(x^{q}-x\right)^{\left(q^{n}-1\right) /2}+\left(x^{q}-x\right)^{\left(q^{n}-1\right) /(q-1)}$ is a P$c$N polynomial for any  $c\in\mathbb{F}_{q}\backslash\{1\}$.
\end{example}

\begin{example} Let $q=2^4$  and $\phi(x)=x^2+x\in\mathbb{F}_{q^3}[x]$. One can check that $\phi(x)$ is 2-to-1 over $\mathbb{F}_{q}$ and $\phi(x)$ permutes $J=\{x^q-x: x \in \mathbb{F}_{q^n}\}$. Experiments show that  $F(x)=x^2+x+\left(x^{q}-x\right)^{\left(q^{3}-1\right) /3}+\left(x^{q}-x\right)^{\left(q^{3}-1\right) /5}$ is an AP$c$N polynomial for any  $c\in\mathbb{F}_{q}\backslash\{1\}$.
\end{example}

\section{Conclusion remarks}
In this paper, we mainly focused on the constructions of P$c$N and AP$c$N functions. Briefly, we presented five classes of P$c$N or AP$c$N functions by using the cyclotomic technique and the switch method. Moreover, by employing AGW and generalized AGW criteria, we proposed four classes of P$c$N or AP$c$N functions. To end this paper, we summarize the known P$c$N and AP$c$N functions for $c\neq 0$ in Table \ref{tab1}.
%
%

A natural question is whether the P$c$N and AP$c$N functions presented in this paper are new or not. Hasan, Pal, Riera, St$\breve{a}$nic$\breve{a}$ in \cite{HPRS} showed that the $c$-differential uniformity of a given function $F(x)$ is preserved through $F\circ L(x)$ for an affine permutation $L(x)$ (note that it is not preserved through $L_1\circ F\circ L_2(x)$ for affine permutations $L_1(x)$ and $L_2(x)$) and is not invariant under EA-equivalence and CCZ-equivalence. By comparing the algebraic degrees, the values of $c$ and the characteristics $p$, it can be readily verified that the functions constructed in Section \ref{sec2} are not equivalent to the known ones under the operation $F\circ L(x)$, and the relation of the functions constructed in Section \ref{sec3} with the known ones remains unknown due to the uncertainly of $\phi(x)$ and $g(x)$. As a future work, we will investigate the $c$-differential invariants and construct more P$c$N and AP$c$N functions from different approaches.


\begin{thebibliography}{99}

\bibitem{AGW} A. Akbary, B. Ghioca, Q. Wang. On constructing permutations of finite fields, Finite Fields Appl., 17(1), pp. 51-67, 2011.

\bibitem{BC} D. Bartoli, M. Calderini, On construction and (non)existence of $c$-(almost) perfect nonlinear functions, arXiv:2008.03953v2.

\bibitem{BT} D. Bartoli, M. Timpanella,  On a generalization of planar functions, J. Algebr. Comb., https://doi.org/10.1007/s10801-019-00899-2.

\bibitem{BRS} E. Berlekamp, H. Rumsey, G. Solomon, On the solutions of algebraic equations over finite fields, Information and Control, 10(6), pp. 553-564, 1967.

\bibitem{BS} E. Biham, A. Shamir, Differential cryptanalysis of DES-like cryptosystems, J. Cryptology, 4(1), pp. 3-72, 1991.

\bibitem{BCJW} N. Borisov, M. Chew, R. Johnson, B. Wagner, Multiplicative Differentials, In: Daemen J., Rijmen V. (eds.) Fast Software Encryption.  LNCS, vol. 2365. Springer, Berlin, Heidelberg, 2002.

\bibitem{BH} L. Budaghyan, T. Helleseth, New perfect nonlinear multinomials over $\mathbb{F}_{p^{2k}}$ for any odd prime $p$,  In: Golomb S.W., Parker M.G., Pott A., Winterhof A. (eds.) Sequences and Their Applications. LNCS, vol. 5203, Springer, Berlin, Heidelberg, 2008.

\bibitem{CHHKXZ} R. Coulter, M. Henderson, L. Hu, P. Kosick, Q. Xiang,  X. Zeng, Planar polynomials and commutative semifields two dimensional over their middle nucleus and four dimensional over their nucleus, [Online], Available:  http://www.math.udel.edu/\~{}coulter/papers/d24.pdf.

\bibitem{CM} R.  Coulter and R.  Matthews, Planar functions and planes of Lenz-Barlotti class II, Designs, Codes Cryptogr., 10(2), pp. 167-184, 1997.

\bibitem{DO} P. Dembowski and T.  Ostrom, Planes of order $n$ with collineation groups of order $n^{2}$,  Math. Z., 103(3), pp. 239-258, 1968.

\bibitem{DY} C. Ding and J. Yuan, A family of skew Hadamard difference sets,  J. Comb. Theory, Ser. A, 113(7), pp. 1526-1535, 2006.


 \bibitem{D1} H. Dobbertin, Almost perfect nonlinear power functions on GF($2^n$) : A new case for $n$ divisible by 5, in Finite Fields and Applications, Augsburg, Germany,  pp. 113-121, 1999.

\bibitem{D2} H. Dobbertin, Almost perfect nonlinear power functions on GF($2^n$) : The Welch case, IEEE Trans. Inf. Theory, 45(4), pp.
1271-1275, 1999.


\bibitem{D3} H. Dobbertin, Almost perfect nonlinear power functions on GF($2^n$) : The Niho case, Inform. Comput., 151(1-2) pp. 57-72,
1999.

\bibitem{EFRST} P. Ellingsen, P. Felke, C. Riera, P. St$\breve{a}$nic$\breve{a}$,  A. Tkachenko, $C$-differentials, multiplicative uniformity and (almost) perfect cnonlinearity, IEEE Trans. Inf. Theory, 66(9),  pp. 5781-5789, 2020.

\bibitem{HPRS} S. Hasan, M. Pal, C. Riera,  P. St$\breve{a}$nic$\breve{a}$,   On the $c$-differential uniformity of certain maps over finite fields, arXiv: 2004.09436v2.


\bibitem{KCM} K.  Kim, J. Choe, S. Mesnager, Solving ${X}^{q+1}+X+a=0$ over Finite Fields, https://eprint.iacr.org/2019/1493.pdf.

%

\bibitem{LHT} N. Li, T. Helleseth, X. Tang, Further results on a class of permutation polynomials over finite fields, Finite Fields Appl., 22, pp.   16-23, 2013.

\bibitem{MQ} S. Mesnager, L. Qu, On two-to-one mappings over finite fields, IEEE Trans. Inf. Theory, 65(12), pp. 7884-7895, 2019.

\bibitem{N} K. Nyberg, Differnetially uniform mappings for cryptography, In: Helleseth  T. (ed.) EUROCRYPT 1993. LNCS, vol. 765, pp. 55-64. Springer, Heidelberg, 1994.


\bibitem{RS} C. Riera,  P. St$\breve{a}$nic$\breve{a}$,  Investigations on $c$-(almost) perfect nonlinear functions, arXiv:2004.02245v2.

\bibitem{S} P. St$\breve{a}$nic$\breve{a}$, Low $c$-differential and $c$-boomerang uniformity of the swapped inverse function, arXiv:2009.09268v1.

\bibitem{SG} P. St$\breve{a}$nic$\breve{a}$, A. Geary, The $c$-differential behavior of the inverse function under the EA-equivalence, arXiv:2006.00355v1.

\bibitem{SRT} P. St$\breve{a}$nic$\breve{a}$, C. Riera,   A. Tkachenko, Characters, Weil sums and $c$-differential uniformity with an application to the perturbed Gold function, arXiv:2009.07779v1.

%


\bibitem{YMZ} H. Yan, S. Mesnager, Z. Zhou, Power functions over finite fields with low $c$-differential uniformity, arXiv:2003.13019v3.
%

\bibitem{ZH} Z. Zha, L. Hu, Some classes of power functions with low c-differential uniformity over finite fields,  arXiv:2008.12183v1.

\bibitem{ZKW} Z. Zha, G. M. Kyureghyan,  X. Wang, Perfect nonlinear binomials and their semifields, Finite Fields Appl., 15(2), pp. 125-133,  2009.
\end{thebibliography}
\end{document}